\titleformat*{\section}{\large\bfseries}
\titleformat*{\subsection}{\it}
\newtheorem{thm}{Theorem}
\newtheorem{lem}{Lemma}
\newtheorem{prp}{Proposition}
\newtheorem{algo}{Algorithm}
\newtheorem{remark}{Remark}
\def\beh{{\widehat{\beta}}}
\def\Sih{{\widehat{\Sigma}}}
\def\At{{\widetilde{A}}}
\def\Bt{{\widetilde{B}}}
\def\lah{{\widehat \la}}
\def\ep{{\varepsilon}}
\def\la{{\lambda}}
\def\si{{\sigma}}
\def\th{{\theta}}
\def\Si{{\Sigma}}
\def\Ga{\Gamma}
\def\al{{\alpha}}
\def\be{{\beta}}
\def\ga{{\gamma}}
\def\de{{\delta}}
\def\ep{{\varepsilon}}
\def\la{{\lambda}}
\def\si{{\sigma}}
\def\om{{\omega}}
\def\th{{\theta}}
\def\Th{{\Theta}}
\def\De{{\Delta}}
\def\Si{{\Sigma}}
\def\Ga{{\Gamma}}
\def\Om{{\Omega}}
\def\Up{{\Upsilon}}
\def\La{{\Lambda}}
\def\non{{\nonumber}}
\def\Lc{{\cal L}}
\def\Kc{{\cal K}}
\title{{\bf Robust Bayesian Modeling of Counts with Zero inflation and Outliers: Theoretical Robustness and Efficient Computation}}
\date{}
\author{}
\begin{document}

\maketitle
\doublespacing

{
\renewcommand{\thefootnote}{\fnsymbol{footnote}}
\vspace{-1.5cm}
\begin{center}
Yasuyuki Hamura$^1$, Kaoru Irie$^2$, and Shonosuke Sugasawa$^3$
\end{center}
}

\noindent
$^1$Graduate School of Economics, Kyoto University\\
$^2$Faculty of Economics, The University of Tokyo \\
$^3$Faculty of Economics, Keio University

\vspace{5mm}
\begin{center}
{\bf \large Abstract}
\end{center}
Count data with zero inflation and large outliers are ubiquitous in many scientific applications. However, posterior analysis under a standard statistical model, such as Poisson or negative binomial distribution, is sensitive to such contamination. 
This study introduces a novel framework for Bayesian modeling of counts that is robust to both zero inflation and large outliers. In doing so, we introduce rescaled beta distribution and adopt it to absorb undesirable effects from zero and outlying counts.
The proposed approach has two appealing features: the efficiency of the posterior computation via a custom Gibbs sampling algorithm and a theoretically guaranteed posterior robustness, where extreme outliers are automatically removed from the posterior distribution. 
We demonstrate the usefulness of the proposed method by applying it to trend filtering and spatial modeling using predictive Gaussian processes.

\bigskip\noindent
{\bf Key words}: Super heavy-tailed distribution; Markov chain Monte Carlo; Poisson regression; spatial regression

\newpage 
\section{Introduction}

Zero inflation and large outliers are ubiquitous problems in statistical modeling of counts; in many applications the use of a Poisson distribution will result in biased inference and predictive analysis. 
While there are standard solutions for dealing zero inflation such as hurdle \citep{arulampalam1997gets,mullahy1986specification} and zero-inflated models \citep{lambert1992zero,shankar1997modeling}, effective and robust methods against large outlying counts remain limited.
For moderate outliers, standard over-dispersion models (e.g., negative binomial or quasi-Poisson models) can be used. 
However, as demonstrated by ``hot spots'' in crime or epidemiological statistics, the data used in recent applied statistics can contain extremely large outliers. Even if 
such extreme values account for only a small portion of the whole sample, the over-dispersion models are not able to produce robust inferences of structures of interest (e.g., mean regression structure).
Moreover, the two problems---large outliers and zero-inflation---have been discussed separately in the literature. To the best of our knowledge, no easy solution for the robust modeling of counts with both zero inflation and large outliers has yet been found.

Beyond Poisson or negative binomial distributions, flexible parametric families of distributions for count data have been proposed.
Examples include the Conway-Maxwell-Poisson distribution \citep{sellers2010flexible,chanialidis2018efficient}, Poisson inverse Gaussian distribution \citep{barreto2016general}, and hyper-Poisson distribution \citep{saez2013hyper}.
However, these models cannot simultaneously account for zero inflation and large outliers. 
\cite{datta2016bayesian} proposed a heavy-tailed distribution for an intensity parameter of the Poisson rate that can deal with zero inflation and large outlying counts. This flexibility comes at the cost of using special functions in the density function, which makes the posterior computation difficult under complicated hierarchical regression models. 
\cite{hamura2019global} proposed a heavy-tailed distribution that can handle outlying counts, but it does not account for zero inflation.

The general methodological framework for robust Bayesian inference has been studied extensively in recent years; however, the focus of this research has not necessarily been on count data analysis. 
For example, several analyses based on a general posterior distribution \citep{bissiri2016general} replace the standard likelihood function with robust divergence and define a robust posterior distribution \citep[e.g.,] []{jewson2018principles,knoblauch2018doubly,hashimoto2020robust}.
However, robust divergence for count distributions, such as Poisson or negative binomial distributions, entails serious numerical difficulty.
For example, the density power divergence \citep{basu1998robust} for a Poisson distribution involves an intractable infinite sum, for which the posterior computation becomes infeasible. 
Other attempts at general robust Bayesian inference include $c$-posterior \citep{miller2019robust} and fractional posterior \citep{bhattacharya2019bayesian}. 
These approaches can be applied to a wide range of data and models, but the posterior distribution may have complicated forms, especially under hierarchical models. 
For continuous observations, although the standard remedy is to simply replace the normal distribution with heavy-tailed alternatives \citep[e.g.,][]{west1984outlier,Gag2019,hamura2020log}, such distributions cannot be directly employed in modeling counts because of the discrete support of the count distribution. 
Finally, non-Bayesian (frequentist) approaches to robustness have also been studied in the context of count data modeling \citep[e.g.,][]{KF2019}, but these methods cannot be used directly for hierarchical modeling of counts, nor can they be customized easily.

In this study, we introduce a tailored framework for the robust hierarchical modeling of counts.
A key component of the proposed method is families of distributions on the positive real line, scaled beta (SB) and rescaled beta (RSB) distributions, which generate both excess zeros and large counts.
These distributions can replace the exponential/gamma distributions used in the existing models for counts while being conditionally conjugate. 
For practical implementation, we derive a custom Gibbs sampling algorithm for the models with the SB and RSB distributions, which can be applied to many models, including hierarchical spatial regression models and mixed models. 
From the theoretical viewpoint, the RSB distribution has the super heavy tail and is shown to have the posterior robustness property. 
We first prove that, with a high probability, the unexplained zero count can be ignored in the posterior inference with the SB and RSB distributions. 
Moreover, if the RSB distribution is used, the posterior distribution can successfully eliminate information from outliers as they move further away from the model mean, which guarantees that the proposed approach can provide robust point estimation as well as robust uncertainty quantification. 
A clear advantage of this property is that the robustness is achieved even if we do not explicitly model zero-inflation structures and the mechanism of generating outlying counts, which are not of interest in most applications.

This paper is organized as follows. 
Section~\ref{sec:RSB} introduces the SB and RSB distributions that play a central role in the proposed robust methodology. 
In Section~\ref{sec:method}, we introduce the proposed method, provide an efficient posterior computation algorithm, and prove the theoretical robustness properties. 
In Section~\ref{sec:exm}, we demonstrate the proposed methods using three models: a generalized linear model for counts, locally adaptive smoothing for trend filtering, and spatial models based on Gaussian processes. 
Section~\ref{sec:remark} gives some concluding remarks. 
All technical issues, including mathematical proofs, are presented in the Supplementary Material. 

{\bf Notation:} We use $N(\mu , \Sigma)$, ${\rm Ex}(v)$, ${\rm Ga}(a,b)$, ${\rm IG}(a,b)$, ${\rm Be}(a,b)$ and ${\rm Po}(\lambda)$ for the multivariate normal, exponential (mean $1/v$), gamma (mean $a/b$), inverse gamma, beta, and Poisson distributions, respectively. We also use this notation for the density functions. For example, the density of ${\rm Ex}(v)$ evaluated at $x$ is written as ${\rm Ex}(x|v)$.

\section{Scaled and Rescaled beta distributions}\label{sec:RSB}

In this section, we summarize the distributional theory used to model the Poisson rate. First, we review the scaled beta (SB) distribution, also known as prime beta or inverted beta distributions, whose density is given by:  
\begin{equation*}
\pi_{\rm SB}(\eta ;a,b) = \frac{1}{B(a,b)} \eta ^{a-1} (1+\eta )^{-(a+b)}, \ \ \ \ \ \eta >0,
\end{equation*}
where $a$ and $b$ are positive shape parameters, and $B(a,b)$ is the beta function. 
This distribution is obtained by the scale transformation of a beta-distributed random variable. 
An important special case is the half-Cauchy distribution (for $\sqrt{\eta}$) obtained by setting $(a,b)=(1/2, 1/2)$. 

A new class of distributions is derived from the SB distribution above by making the change of variables. For $\eta_{\mathrm{SB}} \sim \mathrm{SB}(a,b)$, we set $\eta_{\mathrm{SB}}=\log (1+\eta)$ to obtain the density of $\eta \in (0,\infty)$ given by 
\begin{equation}\label{RSB-general}
\pi_{\rm RSB}(\eta ;a,b)=\frac1{B(a,b)}\frac{\{\log(1+\eta )\}^{a-1}}{1+\eta }\frac{1}{\{1+\log(1+\eta )\}^{a+b}}, \ \ \ \eta >0, 
\end{equation}
which we call {\it the rescaled beta (RSB) distribution}. 
A special case of this class of distributions, $a=1$, is known as the log-Pareto distribution (e.g., \citealt{cormann2009generalizing}), and its robustness property as a shrinkage prior under count observations is considered in \cite{hamura2019global}. A similar idea that involves introducing log-terms to well-known distributions can also be seen in \cite{schmidt2018log}, but it results in a different class of distributions.

A notable aspect of the RSB distribution in (\ref{RSB-general}) is its super-heavy-tailed property. 
The density function of the RSB distribution in the tail is evaluated as 
\begin{equation*}
\pi_{\rm RSB}(\eta ;a,b) \sim \eta ^{-1} (\log \eta ) ^{- (1+b)} \ \ \ \ \ \mathrm{as} \ \ \ \eta \to \infty ,
\end{equation*}
where $f(\eta ) \sim g(\eta )$ indicates that the limit of $f(\eta )/g(\eta )$ is finite. In other words, the RSB distribution is log-regularly varying \citep[e.g.,][]{desgagne2015robustness}. 
This tail is heavier than those of the SB distributions ($\pi_{\rm SB}(\eta) \sim \eta ^{-(1+b)}$) and the half-Cauchy and half-$t$ distributions. 
By contrast, the SB and RSB densities at the origin are 
\begin{equation} \label{eq:zerodens}
\pi_{\rm SB}(\eta ; a,b) \sim \pi_{\rm RSB}(\eta ;a,b) \sim \eta ^{a-1} \ \ \ \ \ \mathrm{as} \ \ \ \eta \downarrow 0.
\end{equation}
That is, the behaviors of the RSB and SB densities with shape $a$ are the same near the origin. A spike in the density at $\eta =0$ appears if and only if $0<a<1$. In the left panel of Figure~\ref{fig:dens}, some examples of SB and RSB densities are plotted, from which we can visually confirm the behaviors of SB and RSB densities at the origin and tail. The tail probability of the RSB distribution can be computed explicitly by transforming its density back to the beta density. Let $t\sim {\rm Be}(a,b)$ and $F$ be the distribution function of $t$, or $F(t_0) = P[t\le t_0]$ for $t_0 > 0$, which is computed as the scaled incomplete beta function. Then, for $\eta \sim \pi _{\rm RSB}(\eta;a,b)$, we have 
\begin{equation*}
P[\eta \le \eta _0] = F\left( \frac{\log (1+\eta _0)}{1+\log (1+\eta _0)} \right) \ \ \ \ \ \mathrm{for} \ \ \ \eta _0>0.
\end{equation*}
The right panel of Figure~\ref{fig:dens} shows the distribution functions of the SB and RSB distributions. The tail probability of the RSB distributions converges to unity as $u_0\to \infty$, but it does so very slowly, reflecting the characteristics of its log-regular variation. The probability of having values greater than or equal to 1,000 is at least 0.2, allowing extremely large values to be realized. From these observations, we set $a=b=1/2$ as the default for the subsequent analyses in this research, but it can be generalized to arbitrary $0<a<1$ and $b>0$, depending on the prior belief of interest.

\begin{figure}[!htb]
\centering
\includegraphics[width=14cm,clip]{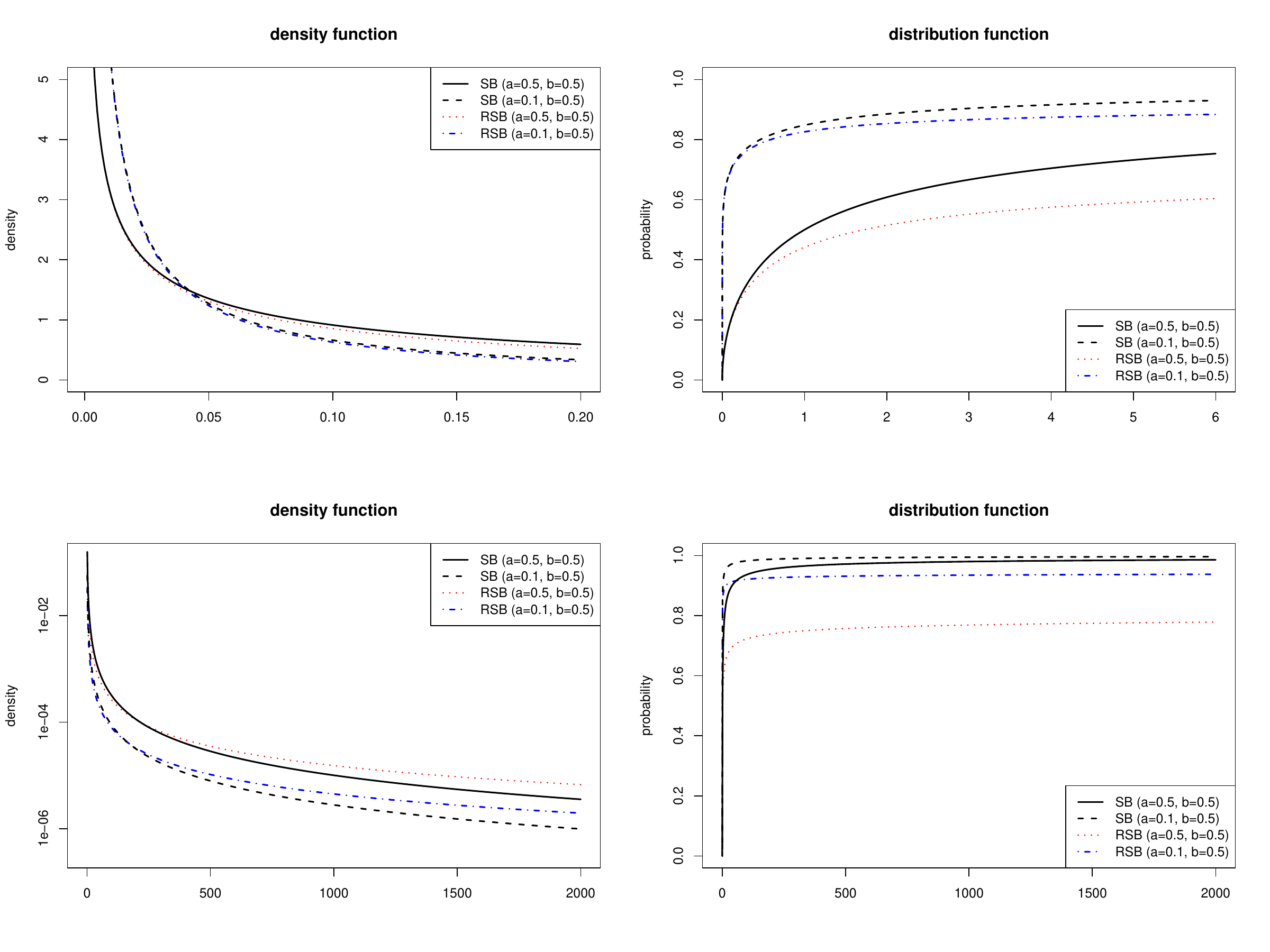}
\caption{Probability densities and distribution functions of the scaled-beta (SB) and rescaled-beta (RSB) distributions. The densities of SB$(0.5,0.5)$, SB$(0.1,0.5)$, RSB$(0.5,0.5)$, and RSB$(0.1,0.5)$ are shown near the origin (top left) and tail (bottom left). The distribution functions of the four distributions are shown near the origin (top right) and tail (bottom right).
\label{fig:dens}
}
\end{figure}

The SB and RSB distributions can be represented as hierarchical models of well-known distributions. 
For the RSB distributions, we consider two representations: one for posterior computation and the other for interpretation.

\begin{thm}\label{thm:int}
The RSB distribution has the following integral expressions:
\begin{equation}\label{RSB-integral}
\pi _{\rm RSB}(\eta ;a,b) 
=\int_{\mathbb{Re}_+^3} {\rm Ex}(\eta |u) \ {\rm Ga}(u|v+w,1) \ g(v,w ) \ dudvdw, 
\end{equation}
where $\Gamma (\cdot)$ is the gamma function, and $g(v,w)$ is the bivariate joint probability density defined by 
\begin{equation}\label{density-g}
g(v,w) = \frac{1}{B(a,b)\Gamma (1-a) \Gamma (a+b)}  \frac{v^{-a}w^{a+b-1}e^{-w} }{v+w}. 
\end{equation}
\end{thm}

Theorem~\ref{thm:int} shows that the RSB distribution is a mixture of exponential distributions, which is conditionally conjugate as a prior for many sampling models, including the Poisson distribution. 
This result parallels the well-known mixture representation of the SB distribution (e.g., \citealt{armagan2011generalized}):
\begin{equation*}
    \pi_{\rm SB}(\eta) = \int _{\mathbb{R}} {\rm Ga}(\eta | a, u) {\rm Ga}(u|b,1)du,
\end{equation*}
which is also conditionally conjugate for the Poisson distribution.

In the augmented model in (\ref{RSB-integral}), note that the density of $v$ (and $w$) conditional on the other variables involves the reciprocal gamma function with argument $v+w+1$, from which it is not straightforward to simulate. 
However, the joint density of $(u,v,w)$ conditional on $\eta$ can be computed as 
\begin{equation*}
\begin{split}
        \pi (u,v,w|\eta) 
        & = \pi (u|v,w,\eta) \ \pi (v|\eta ) \ \pi (w|\eta ) \\
        & 
        = {\rm Ga}(u|v+w+1,1+\eta ) \ {\rm Ga}(v|1-a,\log(1+\eta )) \  {\rm Ga}(w|a+b,1+\log(1+\eta )).
\end{split}
\end{equation*}
The simulation from this composition form is straightforward. This means that the RSB distributions can replace the exponential/gamma distributions in the model of interest without complicating the posterior computation by using the Markov chain Monte Carlo method. In Section~\ref{sec:pos}, we use this result to provide an efficient posterior computation algorithm for the robust regression models for counts. 

For a better understanding of the proposed distribution, we consider the transformation defined by $\tilde{v}=v+w$ and $\tilde{w}=v/(v+w)$. Then, the RSB distribution is the marginal distribution of the following hierarchical model: 
\begin{equation*}
    \eta |u,\tilde{v},\tilde{w} \sim \mathrm{Ex}(u), \ \ \ u|\tilde{v},\tilde{w} \sim \mathrm{Ga}(\tilde{v},1), \ \ \ \tilde{v}|\tilde{w} \sim \mathrm{Ga}(b,\tilde{w}), \ \ \ \mathrm{and} \ \ \ \tilde{w}\sim \mathrm{Be}(a,1-a).
\end{equation*}
This expression resembles the SB distribution and its representation as a mixture of exponential distributions. That is, if $\eta \sim \mathrm{SB}(a,b)$, then it is written as the marginal distribution of the following hierarchical model (e.g., \citealt{zhang2022bayesian}): 
\begin{equation*}
    \eta |\tilde{v},\tilde{w} \sim \mathrm{Ex}(\tilde{v}), \ \ \ \tilde{v}|\tilde{w} \sim \mathrm{Ga}(b,\tilde{w}), \ \ \ \mathrm{and} \ \ \ \tilde{w}\sim \mathrm{Be}(a,1-a).
\end{equation*}
The difference from the RSB distribution is the lack of latent variable $u$. From this expression, the RSB distribution is regarded as an extension of the SB distribution by the additional hyperprior for the shape parameter of the gamma distribution. 

To summarize, the SB and RSB distributions can have the density spike at zero and (super) heavy tail, while being conditionally conjugate for the Poisson likelihood. 
The same density spike and heavy tail are also seen in the scale mixture of normals and utilized in the shrinkage estimation, where the scale follows the half-Cauchy distribution \citep{carvalho2009handling,carvalho2010horseshoe} or the variance follows the SB distribution \citep{armagan2011generalized,zhang2022bayesian}. Although these scale mixtures of normals can also be used for positive real values with truncation (e.g., \citealt{okano2022locally}), they are not conjugate for the Poisson likelihood.

\section{Robust Bayesian modeling of counts using the RSB distribution }\label{sec:method}

\subsection{Settings and models}
Let $y_i$ with $i=1,\ldots,n$ be count-valued response variables.
We consider the following generalized linear models: 
\begin{equation}\label{model}
y_i\sim {\rm Po}(\eta_i\lambda_i), \ \ \ \log\la_i=x_i^{\top}\beta + g_i^{\top}\xi_i + \log \gamma_i,
\end{equation} 
where $x_i$ and $g_i$ are the $p$- and $q$-dimensional vector of covariates, $\beta$ is a $p$-dimensional vector of regression coefficients, and $\xi_i$ is a $q$-dimensional vector of random effects. 
The second term, $g_i^{\top}\xi_i$, is typically used to model spatial and/or temporal effects. 
The last term, $\log \gamma _i$, extends the class of sampling models to the mixture of Poisson distributions. For example, if $\gamma _i$ follows a gamma distribution, then the marginal distribution of $y_i$ is the negative binomial distribution \citep[e.g.,][]{hilbe2011negative}. If $\gamma _i=1$, then the model reduces to the Poisson regression. This modeling philosophy is consistent with \cite{wang2018general}; expression (\ref{model}) can cover the variants of regression models for counts.

In model (\ref{model}), multiplicative noise $\eta_i$ is expected to absorb observations that cannot be explained by the structural term of $\la_i$. 
The standard Poisson model corresponds to a case in which $\eta_i=1$, but that model with outliers and zero inflation would result in inefficient and biased inference for $\beta$ and $\xi_i$. 
Another well-known model is zero-inflated Poisson regression, where $\eta_i\sim s\delta_0+(1-s)\delta_1$, with $\delta_a$ being the one-point distribution on $a$ \citep[e.g.,][]{lambert1992zero}.
The expected role of $\eta_i$ in robustness is that $\eta_i$ should have large values when $y_i$ is an outlier (large count) or $y_i/\la_i$ is very large. Likewise, $\eta_i$ should be sufficiently small when $y_i$ is a meaningless zero, or when the probability of $y_i=0$ is extremely small under the Poisson distribution with mean $\la_i$.
These properties of $\eta_i$ can automatically eliminate unnecessary information brought by zero inflation and large outliers, leading to robust posterior inference on $\la_i$. 
To achieve these properties, we use the SB and RSB distributions introduced in Section~\ref{sec:RSB}.

We propose the following distribution for $\eta_i$:
\begin{equation}\label{RSB}
f(\eta_i)=(1-s)\delta_1(\eta_i) + sH(\eta_i),  \ \ \ \ \eta_i>0,
\end{equation}
where $\delta_1(\cdot)$ is the one-point distribution on $1$, and $s\in (0,1)$ is an unknown mixing proportion. 
The first component is meant to capture non-outlying observations, whereas the second is designed to explain zero inflation and outliers. 
The distribution $H$ in the second component can be either SB or RSB distribution. Here, we consider the RSB distribution of the form, 
\begin{equation}\label{default-RSB}
H(\eta )=\frac1{\pi}\frac{\{\log(1+\eta )\}^{-1/2}}{(1+\eta )\{1+\log(1+\eta )\}}, \ \ \ \ \eta >0,
\end{equation}
which corresponds to density (\ref{RSB-general}) with $a=b=1/2$. 
As seen in Section~\ref{sec:RSB}, $H(\cdot)$ has an extremely heavy tail, $H(\eta )\sim \eta ^{-1}(\log \eta )^{-3 / 2}$ as $\eta \to \infty$, known as log-regularly varying density, and the value of the density diverges at the origin at the rate of $\eta ^{-1/2}$. 
Distribution (\ref{RSB}) has only one unknown parameter, $s$, that controls the potential number of outliers and zeros, and it is estimated by assigning a prior distribution.  
As long as $s>0$, the notable properties of $H$, the extremely heavy tail and spike at the origin, are inherited by mixture distribution (\ref{RSB}) for $\eta_i$. We refer to distribution (\ref{RSB}) as the {\it RSB mixture} distribution.

\subsection{Posterior computation}\label{sec:pos}
Another notable property of RSB mixture distribution (\ref{RSB}) is its computational tractability.
The posterior inference on $\la_i$ and $\eta_i$ under (\ref{model}) can be easily performed using a simple Gibbs sampling method. 
We first introduce the following hierarchical expression for $\eta_i$:
\begin{equation}\label{Eta}
    \eta_i = \begin{cases}
    \eta_{1i}\sim \delta_1 &\mathrm{if} \ z_i=0 \\
    \eta_{2i}\sim H(\cdot)  &\mathrm{if} \ z_i=1,
    \end{cases}
\end{equation}
where ${\rm Pr}(z_i=1) = 1 - {\rm Pr}(z_i=0) = s$, and $\eta_{1i}$ and $\eta_{2i}$ are mutually independent. 
Note that $z_i$ can be interpreted as an indicator of outliers; that is, $y_i$ can be an outlier if $z_i=1$.
Here, $s$ is an unknown mixing proportion, and we assign the conjugate prior $s\sim {\rm Beta}(a_s, b_s)$ with fixed hyperparameters $a_s$ and $b_s$. 
For the regression coefficients $\be$ and $\xi=(\xi_1,\ldots,\xi_n)$, let $\pi(\beta, \xi)$ be a prior distribution for $(\beta, \xi)$. Then, the joint posterior distribution is given by 
\begin{equation}\label{joint-pos}
\pi(\beta, \xi)s^{a_s-1}(1-s)^{b_s-1}\prod_{i=1}^n \left\{(1-s){\rm Po}(y_i;\lambda_i)\right\}^{1-z_i}
\left\{s{\rm Po}(y_i;\eta_{2i}\lambda_i)\right\}^{z_i}
H(\eta_{2i}),
\end{equation}
noting that $\lambda_i$ is a function of $(\beta, \xi)$.
The full conditional distributions of $\eta_{2i}$ are not well-known, but we can utilize the integral expression of the RSB distribution given in Theorem \ref{thm:int} to obtain the following hierarchical expression:
$$
\eta_{2i}|u_i\sim {\rm Ga}(1, u_i), \ \ 
u_i|v_i, w_i \sim {\rm Ga}(v_i+w_i, 1), \ \ 
(v_i,w_i)\sim g(v,w),
$$ 
where the joint distribution of $(v_i,w_i)$ is given by 
$$
g(v,w)=\frac1{\pi^{3/2}}\frac{v^{-1/2}e^{-w}}{v+w}.
$$
Because $\eta_{2i}$ is conditionally gamma-distributed, it follows from (\ref{joint-pos}) that the full conditional distribution of $\eta_{2i}$ is proportional to $\eta_{2i}^{y_i}\exp\{-\eta_{2i}(\lambda_i+u_i)\}$ if $z_i=1$. 
The Gibbs sampler can be derived for the augmented model with $(u_i,v_i,w_i)$, which is summarized as follows:

\medskip
\begin{algo}[Sampling from the conditional posterior of $\eta_i$]
\label{algo:pos1}
Given $\lambda_i$, the conditional posterior sample of $\eta_i$ is obtained as $\eta_i=1-z_i + z_i\eta_{2i}$, where $z_i$ and $\eta_{2i}$ are generated using the following Gibbs sampler: 
\begin{itemize}
\item[-]
(Sampling of $z_i$) \ \  The full conditional distribution of $z_i$ is a Bernoulli distribution in which the probabilities of $z_i=0$ and $z_i=1$ are proportional to $(1-s){\rm Po}(y_i; \la_i)$ and $s{\rm Po}(y_i; \eta_{2i}\la_i)$, respectively. 

\item[-] 
(Sampling of $\eta_{2i}$) \ \  The full conditional distribution of $\eta_{2i}$ is ${\rm Ga}(y_i+1, \la_i+u_i)$ if $z_i=1$ and ${\rm Ga}(1, u_i)$ if $z_i=0$.

\item[-]
(Sampling from the other parameters and latent variables) \ \ The full conditional distributions of $u_i, v_i, w_i$, and $s$ are as follows: 
\begin{align*}
&s|\cdot \sim {\rm Beta}\left(a_s+\sum_{i=1}^nz_i, b_s+n-\sum_{i=1}^nz_i\right), \ \ \ w_i|\cdot \sim {\rm Ga}(1, 1+\log(1+\eta_{2i}))\\
&v_i|\cdot\sim {\rm Ga}(1/2, \log(1+\eta_{2i})), \ \ \ \ \  u_i|\cdot\sim{\rm Ga}(v_i+w_i+1, 1+\eta_{2i}).
\end{align*}
\end{itemize}
\end{algo}

Therefore, all the full conditional distributions related to local parameter $\eta_i$ are familiar forms. As such, the sampling steps can be efficiently carried out. 
Regarding the sampling of $(\beta,\xi)$, its full conditional distribution is proportional to 
\begin{equation}\label{Po-full}
\pi(\beta, \xi)\prod_{i=1}^n{\rm Po}(y_i;\exp(x_i^{\top}\beta+g_i^{\top}\xi_i+\log\gamma_i+\log\eta_i)).
\end{equation}
Hence, $(\beta,\xi)$ can be sampled via existing sampling algorithms by recognizing $\log\eta_i$ as an offset term. 
When the dimension of $(\beta,\xi)$ is not large, standard approaches such as the independent Metropolis-Hastings method can be adopted. 
As a more efficient strategy, we also suggest using an approximated likelihood in place of the Poisson likelihood. 
Let $\ep_i\sim {\rm Ga}(\delta, \delta)$ with a large $\delta>0$, and thus we have most of the probability mass in the neighborhood of $\ep_i=1$.
Then, we approximate the Poisson model (\ref{model}) as $y_i\sim {\rm Po}(\ep_i\la_i\eta_i)$, noting that the approximated model converges to original model (\ref{model}) as $\delta\to\infty$ because $\ep_i\to 1$ in probability. 
By marginalizing $\ep_i$ out and using Polya-gamma data augmentation \citep{polson2013bayesian}, the approximated model for $y_i$ can be expressed as
\begin{align*}
&\frac{\Gamma(y_i+\delta)}{\Gamma(\delta)y_i!}\frac{(\delta^{-1}\la_i\eta_i)^{y_i}}{(\delta^{-1}\la_i\eta_i+1)^{y_i+\delta}}\\
& \ \ \ \ \ \ 
=\frac{\Gamma(y_i+\delta)}{\Gamma(\delta)y_i!}2^{-(y_i+\delta)}e^{\kappa_i\psi_i}\int_0^{\infty}\exp\left(-\frac12\omega_i\psi_i^2\right)p_{PG}(\omega_i;y_i+\delta, 0)d\omega_i,
\end{align*}
where $\kappa_i=(y_i-\delta)/2$, $\psi_i=x_i^{\top}\beta + g_i^{\top}\xi_i +\log\gamma_i+\log\eta_i-\log\delta$, and $p_{PG}$ is the density function of the Polya-gamma distribution.  
Under the approximated model, the sampling steps for $(\beta,\xi)$ and $\omega_i$ are given as follows:

\begin{algo}[Efficient sampling from the conditional posterior of $(\beta,\xi)$ using augmentation]
\label{algo:pos2}
Given $\eta_i$, the posterior samples of $(\beta,\xi)$ can be generated as follows: 
\begin{itemize}
\item[-]
(Sampling of $\beta$ and $\xi$) \  
The full conditional distribution of $(\beta,\xi)$ is proportional to
$$
\pi(\beta,\xi)\exp\left\{-\frac12\sum_{i=1}^n\omega_i(x_i^\top\beta+g_i^\top\xi_i)^2+\sum_{i=1}^n (\kappa_i+\omega_i\log(\delta/\gamma _i\eta_i))(x_i^\top\beta+g_i^\top\xi_i)\right\}.
$$
Hence, the multivariate normal distribution for $(\beta,\xi)$ is conditionally conjugate. 

\item[-]
(Sampling of $\omega_i$) \  
The full conditional distribution of $\omega_i$ is the Polya-gamma distribution: $PG(y_i+\delta, \psi_i)$. 
\end{itemize}
\end{algo}

Algorithm \ref{algo:pos2} is based on an approximation by the negative binomial likelihood and not exactly sampling from the posterior under the target model. If necessary, one can correct the potential bias by importance weighting the posterior samples based on the ratio of the target and approximate likelihoods. 
When the sampling from the Polya-gamma distribution is computationally costly, one can use the efficient sampling strategy of the Polya-gamma random variable based on normal approximation \citep{glynn2019bayesian}. Our model justifies the use of this approximation by setting $\delta$ to a large value. Finally, if one considers the negative binomial distribution by having $\gamma_i$ gamma distributed, then the algorithm above works as the exact posterior sampling without introducing $\epsilon_i$.

Our computational strategy developed here can also be applied to the SB mixture, where $H$ is the SB distribution, with minimal modification. For details, see Section~S7.

\subsection{Theoretical robustness properties}\label{sec:theory}
The advantage of the proposed method is its two-way protection property; the proposed method is robust against both zero inflation and large counts.
Here, we provide theoretical arguments regarding this property. 
In what follows, we assume $\gamma_i=1$ for simplicity.

We first address the zero-inflation issue.
This can be regarded as a situation where we observe $y_i=0$, even when the true regression surface $\lambda_i$ is sufficiently away from zero. Consequently, the probability of $y_i=0$ is extremely small under the true rate $\lambda _i$, which may favor smaller $\lambda_i$ and introduce bias in the posterior distribution of the parameters.  
In the proposed model with the SB and RSB mixture distributions, such an undesirable zero count can be absorbed by the additional term $\eta_i$ being very small, which limits the information from the observation in the posterior of the model parameters. 
Therefore, protection against zero inflation is related to the behavior of the posterior distribution of $\eta_i$ near the origin.
To study this distribution, we consider a general model by replacing the $H$ distribution in (\ref{RSB}) with either $\mathrm{SB}(a,b)$ or $\mathrm{RSB}(a,b)$ with shape parameters $a$ and $b$, whose density tails at the origin are given in (\ref{eq:zerodens}). 
The posterior distribution of $\eta_i$ is 
\begin{align}
p( \eta_i | y_i ) &\propto \eta_i^{y_i} e^{- \la _i \eta_i} \Big\{ (1 - s) \delta_1(\eta_i) + s H(\eta_i;a,b) \Big\} . 
\end{align}
Therefore, the posterior density behaves as $\eta_i^{y_i+a-1}$ near the origin (as $\eta_i\to 0$).
Hence, $\lim_{\eta_i \to 0} p( \eta_i | y_i ) = 0$ when $y_i \ge 1$; that is, the posterior mass is not concentrated near the origin when the observed count is non-zero. 
However, when $y_i = 0$, the shape of the posterior density function changes substantially depending on the value of $a$. The posterior density diverges if $a<1$ and converges to 0 when $a>1$. 
More precisely, we compare the posterior probability of $\eta_i$ near the origin under $y_i=0$, as presented in the following theorem.

\begin{thm}\label{thm:zero}
For any $\ep>0$, the posterior probability $\Pr(\eta_i<\ep|y_i=0; a, b)$ is a non-decreasing function of $\lambda_i$. 
Moreover, if $0<a<1<a' < \infty$ and $s \in (0, 1]$, then $\Pr(\eta_i<\ep|y_i=0; a, b)/\Pr(\eta_i<\ep|y_i=0; a', b)\to\infty$ as $\ep\to 0$. 
\end{thm}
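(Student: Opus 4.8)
The plan is to work throughout with the explicit conditional posterior of $\eta_i$ given $y_i=0$ displayed just above, $p(\eta_i|y_i=0)\propto e^{-\la_i\eta_i}\{(1-s)\delta_1(\eta_i)+s\,\pi_{RSB}(\eta_i;a,b)\}$, whose normalizing constant is $c(a,b)=(1-s)e^{-\la_i}+s\int_0^\infty e^{-\la_i\eta}\pi_{RSB}(\eta;a,b)\,d\eta$. Since $e^{-\la_i\eta}\le1$ and $\pi_{RSB}(\cdot;a,b)$ is a probability density, $c(a,b)\in(0,1]$, so it is finite and strictly positive for every finite $\la_i$. For $\ep\in(0,1)$ the atom at $\eta_i=1$ lies outside $\{\eta_i<\ep\}$, so
\[
\Pr(\eta_i<\ep\,|\,y_i=0;a,b)=\frac{s}{c(a,b)}\int_0^\ep e^{-\la_i\eta}\pi_{RSB}(\eta;a,b)\,d\eta ,
\]
and this identity drives both parts of the proof.

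For the monotonicity statement I would note that $p(\cdot\,|\,y_i=0)$ is the exponential tilting, by the weight $e^{-\la_i\eta}$, of the $\la_i$-free probability measure $(1-s)\delta_1+s\,\pi_{RSB}(\cdot;a,b)$. For $\la_i'\ge\la_i$ the ratio of the two weights, $e^{-(\la_i'-\la_i)\eta}$, is non-increasing in $\eta$, so by the standard fact that a monotone likelihood ratio yields first-order stochastic dominance, the normalized tilted law at $\la_i'$ is stochastically dominated by the one at $\la_i$; evaluating both on the lower set $\{\eta_i<\ep\}$ shows $\Pr(\eta_i<\ep\,|\,y_i=0;a,b)$ is non-decreasing in $\la_i$ for every $\ep>0$ simultaneously, regardless of which side of $\ep$ the atom happens to lie on. (Equivalently, writing the probability as $1/\{1+J_\ep(\la_i)/I_\ep(\la_i)\}$ with $I_\ep(\la_i)=s\int_0^\ep e^{-\la_i\eta}\pi_{RSB}(\eta;a,b)\,d\eta$ and $J_\ep(\la_i)=c(a,b)-I_\ep(\la_i)$, one computes that $\tfrac{d}{d\la_i}\log\{J_\ep/I_\ep\}$ equals a conditional mean of $\eta$ over $\{\eta<\ep\}$ minus a conditional mean of $\eta$ over $\{\eta\ge\ep\}$, the former being $\le\ep$ and the latter $\ge\ep$, hence it is $\le0$.)

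For the divergence statement, note that $c(a,b)$ does not depend on $\ep$, so for $\ep\in(0,1)$,
\[
\frac{\Pr(\eta_i<\ep\,|\,y_i=0;a,b)}{\Pr(\eta_i<\ep\,|\,y_i=0;a',b)}=\frac{c(a',b)}{c(a,b)}\cdot\frac{\int_0^\ep e^{-\la_i\eta}\pi_{RSB}(\eta;a,b)\,d\eta}{\int_0^\ep e^{-\la_i\eta}\pi_{RSB}(\eta;a',b)\,d\eta},
\]
and the prefactor is a fixed finite positive number (for the given $\la_i$). From the explicit form (\ref{RSB-general}), $\pi_{RSB}(\eta;a,b)/\eta^{a-1}\to1/B(a,b)$ and $e^{-\la_i\eta}\to1$ as $\eta\downarrow0$; combined with the elementary fact that $f(\eta)/g(\eta)\to L\in(0,\infty)$ as $\eta\downarrow0$ with $g\ge0$ integrable near $0$ implies $\int_0^\ep f/\int_0^\ep g\to L$, this gives $\int_0^\ep e^{-\la_i\eta}\pi_{RSB}(\eta;a,b)\,d\eta\sim\ep^{a}/\{aB(a,b)\}$ as $\ep\downarrow0$, and likewise with $a$ replaced by $a'$. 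Therefore the ratio of the two integrals is asymptotic to $\{a'B(a',b)/(aB(a,b))\}\,\ep^{\,a-a'}$, which diverges to $+\infty$ as $\ep\to0$ because $0<a<1<a'$ forces $a-a'<0$; multiplying by the fixed positive prefactor completes the proof.

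The step I expect to be the main obstacle is this last one: upgrading the known pointwise behavior of $\pi_{RSB}$ at the origin to the asymptotics of the truncated integral $\int_0^\ep$ (the ``integrate the asymptotic equivalence'' lemma), and verifying that the leftover constants --- the masses $c(a,b),c(a',b)$ and the beta values $B(a,b),B(a',b)$ --- are finite and nonzero so the prefactor really is inert. The monotonicity part is essentially a one-line tilting argument once the posterior has been written as above; its only delicate point is the treatment of the atom at $\eta_i=1$, which the stochastic-dominance formulation disposes of automatically.
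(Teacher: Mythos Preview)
Your proposal is correct, and the overall architecture matches the paper's: for monotonicity you exploit that increasing $\lambda_i$ tilts the posterior by a monotone factor $e^{-\Delta\eta}$ (the paper phrases this as a covariance inequality after approximating $\delta_1$ by ${\rm Ga}(A,A)$ and sending $A\to\infty$, which is the same MLR/FKG fact you state directly); for the ratio you reduce to comparing $\int_0^\ep e^{-\lambda_i\eta}\pi_{\rm RSB}(\eta;a,b)\,d\eta$ with its $a'$ counterpart and use the origin behavior $\pi_{\rm RSB}(\eta;a,b)\sim \eta^{a-1}/B(a,b)$.

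The differences are purely technical but worth noting. You avoid the ${\rm Ga}(A,A)$ approximation altogether by observing that for $\ep<1$ the atom at $1$ simply does not meet $\{\eta<\ep\}$, which is cleaner than the paper's Stirling-based argument that the ${\rm Ga}(A,A)$ contribution to $(0,\ep)$ vanishes as $A\to\infty$. For the divergence, you invoke the ``integrate the asymptotic equivalence'' lemma to get $\int_0^\ep\sim \ep^a/\{aB(a,b)\}$ and hence a ratio $\asymp \ep^{a-a'}\to\infty$; the paper instead bounds the ratio from below by explicit monotonicity, writing the densities out and using that on $(0,\ep)$ one has $\{\log(1+u)\}^{a-1}\ge\{\log(1+\ep)\}^{a-1}$ and $\{\log(1+u)\}^{a'-1}\le\{\log(1+\ep)\}^{a'-1}$ to extract a factor $\{\log(1+\ep)\}^{a-a'}\to\infty$. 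Your route is shorter and gives the exact growth rate; the paper's route avoids any limiting lemma and works entirely through elementary inequalities. Either is fine.
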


The first statement of the above theorem clarifies that the posterior probability concentrates more around the origin as the regression part moves farther away from $0$; that is, the probability of $y_i=0$ expected from the regression part becomes smaller. 
This suggests that in our model, $\eta_i$ can successfully absorb a meaningless zero count. 
In contrast, the second result shows that comparing the two values of the shape parameter, $a<1$ and $a'>1$, 
the posterior probability of $\eta _i \approx 0$ with $a$ is much larger than that with $a'$, implying that the choice of $a=1/2$ is reasonable for protection against zero inflation.

We next consider robustness against large outlying counts. 
For technical purposes, we replace the Dirac measure on $1$ in (\ref{Eta}) with a continuous approximation ${\rm Ga}(\alpha, \alpha)$ with a large fixed value $\alpha$. 
Note that the posterior distribution of $(\beta,\xi)$ is given by 
\begin{align}
p( \be ,\xi | y) &= \frac{ \pi ( \be,\xi ) \prod_{i = 1}^{n} \int_{0}^{\infty } g( \eta_i ) {\rm{Po}} ( y_i | \eta_i \exp( x_i^{\top}\be + g_i^{\top}\xi_i )) d{\eta_i} }{ \int \pi ( \be,\xi ) \big\{ \prod_{i = 1}^{n} \int_{0}^{\infty } g( \eta_i ) {\rm{Po}} ( y_i | \eta_i \exp( x_i^{\top}\be + g_i^{\top}\xi_i )) d{\eta_i} \big\} d\be d\xi } \text{.} \non 
\end{align}
We first define $\Kc$ and $\Lc$ as non-empty index sets of non-outliers and outliers, respectively. 
Similarly, $y_{\Kc } = ( y_i )_{i \in \Kc }$ and $y_{\Lc } = ( y_i )_{i \in \Lc }$ are the sets of non-outliers and outliers, respectively.
We demonstrate that under some suitable conditions, $p( \be , \xi | y) \to p( \be , \xi | y _{\Kc } )$ as outliers take extreme values. This asymptotic result cannot be achieved if the tail of $\eta_i$ is lighter than the RSB distribution (See Section~S4 of the Supplementary Materials), hence we assume $H$ to be the RSB distribution. 
A similar property has been proved under regression models for continuous responses \citep{Gag2019,hamura2020log}.

To be precise about the concept of (non-)outliers, we consider a framework in which $y_i$ is fixed for $i \in \Kc $, but $y_i = y_i ( \om ) \to \infty $ as $\om \to \infty $ for $i \in \Lc $.
Here, we assume that the covariate vectors are sufficiently linearly independent (see the Supplementary Material for more details on the technical assumption). 
Then, we have the following robustness property.

\begin{thm}
\label{thm:robust} 
Assume that $| \Kc _{+} | + 1 \ge | \Lc | + p$, where $\Kc _{+} = \{ i \in \Kc \ | \ y_i \ge 1 \}$ and $p$ is the dimension of $\beta$. 
Suppose that $\pi(\xi,\beta)=\pi(\xi)\pi(\beta)$, $\pi(\xi)$ is normal, and $\pi (\beta)$ is a proper probability density. 
Then, we have $p( \be , \xi | y) \to p( \be , \xi | y _{\Kc } )$ as $\om \to \infty$.
\end{thm}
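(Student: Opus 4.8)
\emph{Reduction.} The plan is to marginalise out the latent $\eta_i$, isolate the $\om$-dependence in the factors attached to the outliers, show that those collapse after a parameter-free renormalisation, and close with dominated convergence. Write $\la_i=\la_i(\be,\xi)=\exp(x_i^t\be+g_i^t\xi_i)$ and $m_i(\be,\xi)=\int_0^\infty g(u)\,{\rm Po}(y_i|u\la_i)\,du$, where $g$ is the density of $\eta_i$ (the Dirac at $1$ replaced by ${\rm Ga}(\al,\al)$ as in the statement). Then $p(\be,\xi|y)=\pi(\be,\xi)\prod_{i=1}^n m_i(\be,\xi)\big/\int\pi(\be,\xi)\prod_{i=1}^n m_i(\be,\xi)\,d\be\,d\xi$, and $p(\be,\xi|y_\Kc)$ is the same with the product restricted to $\Kc$; for $i\in\Kc$ the factor $m_i$ is fixed, so only $\{m_i:i\in\Lc\}$ depends on $\om$. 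Fix a positive $(\be,\xi)$-free sequence $c_i=c_i(\om)$ with $c_i\asymp y_i^{-1}\{\log y_i\}^{-(1+b)}$ and set $N(\om)=\prod_{i\in\Lc}c_i$; it will then suffice to show (i) $\prod_{i\in\Lc}m_i(\be,\xi)/N(\om)\to1$ pointwise in $(\be,\xi)$, and (ii) that $\pi(\be,\xi)\prod_{i=1}^n m_i(\be,\xi)/N(\om)$ is bounded, for all large $\om$, by a single integrable function. Together with properness of $p(\be,\xi|y_\Kc)$, (i)--(ii) give the conclusion by dominated convergence applied to the numerator and denominator of $p(\be,\xi|y)$.

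\emph{Collapse of the outlier factors.} This is the engine. For $i\in\Lc$ I would substitute $t=u\la_i$ to get $m_i(\be,\xi)=(\la_i y_i!)^{-1}\int_0^\infty g(t/\la_i)\,t^{y_i}e^{-t}\,dt$, where $t^{y_i}e^{-t}/y_i!$ is the ${\rm Ga}(y_i+1,1)$ density, concentrated at $t=y_i(1+O(y_i^{-1/2}))$. Since the tail of $g$ is, up to a constant, that of the RSB density, which is log-regularly varying, and the ${\rm Ga}(\al,\al)$ component of $g$ contributes only $O(e^{-cy_i})$, we have $g(t/\la_i)/g(y_i/\la_i)\to1$ on the effective support; hence $m_i(\be,\xi)=\la_i^{-1}g(y_i/\la_i)(1+o(1))\asymp y_i^{-1}\{\log y_i-x_i^t\be-g_i^t\xi_i\}^{-(1+b)}$, which tends to $c_i$ up to a constant as $y_i\to\infty$, giving (i). The same computation, with the global bounds $g(u)\le Cu^{a-1}$ and $g(u)\le Cu^{-1}\{\log(e+u)\}^{-(1+b)}$, also yields the $\om$-uniform estimates $m_i(\be,\xi)\le C\,\Ga(y_i+a)/\{\la_i^a\,\Ga(y_i+1)\}$ and $m_i(\be,\xi)\le C\,y_i^{-1}\{\log(e+y_i/\la_i)\}^{-(1+b)}$, valid for all $(\be,\xi)$, which feed the domination step.

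\emph{Properness and the majorant.} I would integrate $\xi$ out first: the normal prior $\pi(\xi)$ controls every $\xi$-direction, leaving a $\be$-integral over $\mathbb{R}^p$. Each non-outlier factor with $i\in\Kc_{+}$ decays like $\la_i^{-a}$ (exponentially in $x_i^t\be$) as $x_i^t\be\to+\infty$ and, since ${\rm Po}(y_i|0)=0$ for $y_i\ge1$, like $|x_i^t\be|^{-(1+b)}$ as $x_i^t\be\to-\infty$, while the $y_i=0$ factors stay bounded; the hypothesis $|\Kc_{+}|+1\ge|\Lc|+p$ ensures there are enough such indices, with covariates in general position, that $\int\pi(\be)\prod_{i\in\Kc}m_i\,d\be<\infty$ (so $p(\be,\xi|y_\Kc)$ is proper) and a surplus remains. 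For (ii), split parameter space: on the set where $\la_i(\be,\xi)\le y_i^{1-\delta}$ for every $i\in\Lc$ (fixed $\delta\in(0,1)$), the second estimate above gives $\prod_{i\in\Lc}m_i/N(\om)\le C$ uniformly in $\om$; on the complement at least one $\la_i$, hence some $x_i^t\be+g_i^t\xi_i\gtrsim(1-\delta)\log y_i$, is large, so $(\be,\xi)$ lies in a region that recedes to infinity with $\om$ and on which the surplus non-outlier factors decay exponentially in $x_i^t\be$ --- fast enough that the contribution of this region is $o(1)$, despite the at most $(\log y_i)^{1+b}$ growth of $m_i/c_i$ there. Assembling the two pieces gives the $\om$-uniform integrable majorant; dominated convergence then yields $\int\pi(\be,\xi)\prod_{i=1}^n m_i\,d\be\,d\xi/N(\om)\to\int\pi(\be,\xi)\prod_{i\in\Kc}m_i\,d\be\,d\xi$, and dividing the pointwise-convergent numerator by this limit gives $p(\be,\xi|y)\to p(\be,\xi|y_\Kc)$.

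\emph{Main obstacle.} The delicate part is this uniform majorisation. Near ``resonance'', where an outlier $y_i$ is perfectly explained by $\la_i\approx y_i$, the ratio $m_i/c_i$ is of order $(\log y_i)^{1+b}$ and grows with $\om$, so no constant multiple of the limiting posterior density dominates globally; the growth has to be confined to the region where the linear predictor is pushed to $+\infty$, where one must extract enough decay from the remaining non-outlier likelihood factors to absorb it. Making this balance work uniformly in $(\be,\xi)$ and in all large $\om$ is exactly where the Gaussian prior on $\xi$ and the counting condition $|\Kc_{+}|+1\ge|\Lc|+p$ are used, and it is the heart of the proof; the overall scheme parallels the Gaussian-regression arguments of \cite{Gag2019,hamura2020log}.
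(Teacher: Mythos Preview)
Your high-level scheme---normalise each outlier factor by a $(\be,\xi)$-free quantity $c_i$, show the normalised factors tend to $1$ pointwise, and then pass to the limit in numerator and denominator by dominated convergence---is exactly the paper's strategy, with $c_i=f_\pi(y_i;0)$ playing the role of your $c_i\asymp y_i^{-1}(\log y_i)^{-(1+b)}$. Pointwise convergence is handled the same way (concentration of the ${\rm Ga}(y_i+1,1)$ kernel together with log-regular variation of the tail of $g$).

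Where you diverge from the paper, and where the gap lies, is in the construction of the majorant. The paper's key technical step (its Lemma~\ref{lem:DCT}) is the bound
\[
\frac{f_\pi(y;z)}{f_\pi(y;0)}\;\le\;C\,(1+|z|)^{1+b}\qquad\text{for all }y\ge1\text{ and all }z\in\mathbb{R},
\]
obtained via the elementary inequality $\{1+\log(1+u)\}/\{1+\log(1+u/v)\}\le 1+\log(1+v)$ (Lemma~\ref{lem:log}). The crucial feature is that this is \emph{uniform in $y$}: after normalisation, the outlier product is dominated by $\prod_{i\in\Lc}C(1+|x_i^t\be+g_i^t\xi_i|)^{1+b}$, which is $\om$-independent. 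Paired with the non-outlier estimate $f_\pi(y_i;z)\le C\,e^{a|g_i^t\xi_i|}(1+e^{|g_i^t\xi_i|})^{1+b}(1+|x_i^t\be|)^{-(1+b)}$ for $i\in\Kc_+$, the counting condition $|\Kc_+|+1\ge|\Lc|+p$ balances polynomials against polynomials (via the lemma that $\prod_{i\in\Kc_+}(1+|x_i^t\be|)^{-1}\le(1+\al|\be|)^{-(|\Kc_+|-p+1)}$ for large $|\be|$), the normal $\pi(\xi)$ absorbs the exponential $\xi$-factors, and a single $\om$-free integrable majorant drops out.

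Your proposal instead uses $y$-dependent bounds on $m_i$ and a region-splitting argument. This runs into several difficulties. First, what you describe is not a dominated-convergence majorant but a direct estimate of the bad-region integral; the sentence ``assembling the two pieces gives the $\om$-uniform integrable majorant'' conflates the two. Second, your claim that the surplus non-outlier factors ``decay exponentially in $x_i^t\be$'' is only correct as $x_i^t\be\to+\infty$; toward $-\infty$ the decay is merely $(1+|x_i^t\be|)^{-(1+b)}$, and the counting condition is calibrated precisely for this polynomial regime, not for an exponential one. Third, even granting polynomial decay, the step from ``$x_j^t\be+g_j^t\xi_j\gtrsim(1-\de)\log y_j$ for some outlier $j$'' to ``enough decay from $\prod_{i\in\Kc_+}$'' needs a linear-algebra/general-position argument that you do not supply. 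The paper's uniform-in-$y$ bound makes all of this unnecessary: once you have it, there is no resonance region to split off, and the ``main obstacle'' you identify simply disappears.
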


Theorem~\ref{thm:robust} indicates that abnormally large counts, or outliers, are automatically ignored in the posterior distribution, even though we do not know which samples are outliers. 
To achieve this robustness property, it is essential that an error distribution is super heavily tailed. 
As previously stated, if the tail of the error distribution is lighter than that of the RSB mixture distribution, the posterior robustness does not hold. 
In contrast to the error distribution, there is no constraint on the choice of a prior distribution for $\beta$. It is noteworthy that the theorem proves the posterior robustness even when the density of $\beta$ is unbounded, and it allows the use of a shrinkage prior, such as the horseshoe prior (\citealt{carvalho2009handling,carvalho2010horseshoe}). The above robustness theorem excludes the use of improper priors for $\beta$. To guarantee the posterior robustness when using an improper prior for $\beta$, the condition of the number of non-zero counts must be strengthened; see Remark~S1 (in Section~S5) for details.

The RSB (mixture) distribution has no moment for its super-heavy tail, but the posterior of $(\beta ,\xi)$ has finite moments under mild conditions. 
\begin{prp}
\label{prp:posterior_moments} 
Let $m = \#\{ i \in \{ 1, \dots , n \} | y_i \ge 1 \}$ be the number of non-zero counts. 
\begin{enumerate}
    \item If $m\ge p$, the $k$-th posterior moment of $\xi$ is finite for any $k\ge 1$. 
    
    \item If $m\ge p + k - 1$ for some $k\ge 1$, the $k$-th posterior moment of $\beta$ is finite. 
\end{enumerate}
\end{prp}
The posterior means or variances of $\beta$ and $\xi$, which are frequently used for inference in practice, are guaranteed to exist. As proved in the Supplementary Material, this proposition is obtained as a corollary of the more general statement. The condition of the number of non-zero observations is similar to those necessary for finite posterior moments in robust linear regression models \citep[e.g.,][]{Gag2019, hamura2020log}.

\section{Illustrations}\label{sec:exm}
We demonstrate robust count modeling with an RSB mixture under a variety of frequently used models through both simulation and empirical studies. The details of the posterior computation algorithm are provided in Supplementary Material~S7.

\subsection{Robust count regression}
We first consider and investigate the performance of a standard count regression model with the RSB mixture distribution, together with existing robust and non-robust models. 
To this end, we generated $n=300$ observations from a Poisson regression model with $p=15$ covariates, given by $y_i^{\ast} \sim {\rm Po}(\la_i)$ for $i=1,\ldots,n$, where
$$
\log \la_i = \beta_0 +\sum_{k=1}^p \beta_kx_{ik}.
$$
We set $\beta_0=0.5$, randomly generate $\beta_1, \beta_2, \beta_3$ from $U(0, 0.4)$, $U(0.3, 0.7)$ and $U(0.1, 0.5)$, respectively, in each replication, and set the other coefficients to $0$.
Here, the vector of covariates $(x_{i1},\ldots,x_{ip})$ was generated from a multivariate normal distribution with a zero-mean vector and variance--covariance matrix with $(k,\ell)$-element equal to $(0.2)^{|k-\ell|}$ for $k,\ell \in \{1,\ldots,p\}$.
For each $i$, we independently generated a discrete random variable $d_i$, following $\Pr[d_i=1]=\omega_1$, $\Pr[d_i=2]=\omega_2$, and $\Pr[d_i=0]=1-\omega_1-\omega_2$, and we then generated the contaminated observations as 
\begin{equation*}
    y_i = \begin{cases}
    y_i^{\ast} & \mathrm{if} \ d_i=0 \\
    0 & \mathrm{if} \ d_i = 1 \\
    y_i^{\ast}+y_o & \mathrm{if} \ d_i=2 
    \end{cases}.
\end{equation*}
Thus, $d_i=1$ and $d_i=2$ correspond to a meaningless zero and a large outlying count, respectively.
The severity of outliers is controlled by $y_o$; we considered two values, $y_o=20$ and $y_o=50$. 
We also adopted the following nine cases for the combinations of $\omega_1$ and $\omega_2$.
$$
\begin{array}{cccccccccccc}
{\rm Scenario} & 1 & 2 & 3 & 4 & 5 & 6 & 7 & 8  \\
\omega_1& 0 & 0.05 & 0.05 & 0.05 & 0.05 & 0.1 & 0.1 & 0.1   \\
\omega_2& 0 & 0 &  0.05 & 0.1 & 0.15 & 0.05 & 0.1 & 0.15
\end{array}
$$

For the simulation dataset, we applied Poisson regression using RSB mixture distribution (\ref{RSB}) with the default choice of hyperparameters, denoted by RSB.
We also applied the model with SB distribution, where the hyperparameters are set to $a=1/2$ and $b=1/10$ to make the tail of the SB distribution heavy to approximately achieve posterior robustness. 
For comparison with existing models, we applied the standard Poisson regression (PR), negative binomial regression (NB), zero-inflated Poisson regression (ZIP), and zero-inflated negative binomial regression (ZINB) models, where all the methods are implemented in a Bayesian way. 
As a prior distribution, we assign $\beta=(\beta_0,\beta_1,\ldots,\beta_p)\sim N(0, 100I_{p+1})$, and employed an independent Metropolis-Hasting (MH) algorithm to sample $\beta$ from its full conditional distribution. 
We generated 1000 posterior samples, after discarding the first 1000 posterior samples, to compute posterior means and $95\%$ credible intervals of $\beta_k$ for $k=1,\dots, p$.

To evaluate the estimation performance, we computed the mean squared errors (MSE) of the point estimates of $\beta$, scaled mean squared errors (SMSE) of the Poisson intensity, and interval scores (IS) given by \cite{gneiting2007strictly} of the $95\%$ credible intervals of $\beta$, based on $R=500$ replications of the simulation.
These three measures are defined as 
$$
{\rm MSE}=\frac{1}{(p+1)R}\sum_{r=1}^R\sum_{k=0}^p (\beh_{k(r)}-\beta_k)^2, \ \ \ \ 
{\rm SMSE}=\frac{1}{nR}\sum_{r=1}^R\sum_{i=1}^n \frac{(\lah_{i}-\la_i)^2}{\la_i^2}
$$
and
$$
{\rm IS}=\frac1R\sum_{r=1}^R\sum_{k=0}^p\left[{\rm CI}_{k(r)}^u-{\rm CI}_{k(r)}^l+\frac2{\alpha}\Big\{(\beta_k-{\rm CI}_{k(r)}^u)_+ + ({\rm CI}_{k(r)}^l-\beta_k)_+\Big\}\right], 
$$
where $\beh_{k(r)}$ is a point estimate of $\beta_k$, $\lah_i=\exp(x_i^\top \beh)$ is estimated intensity, and ${\rm CI}_{k(r)}^u$ and ${\rm CI}_{k(r)}^l$ are the upper and lower values of the credible interval of $\beta_k$ in the $r$th replication. 
Here, $\alpha$ is the nominal level (e.g., $\alpha=0.05$) and $(x)_{+}=\max(0, x)$.
These values were averaged over $k=1,\ldots, p$.
Note that IS is a scalar measure that takes into account both the coverage probability and interval length; the smaller the IS, the higher the coverage rate and narrower the interval estimates.
The results are shown in Figure~\ref{fig:sim-nst}, in which the approximated $99\%$ confidence intervals of MSE and IS, based on the estimated Monte Carlo errors, are also presented.

In scenarios 1 and 2, the generated data are zero inflated but have no large outliers. 
Hence, the ZIP and ZINB models show reasonable performance in terms of both point and interval estimation, as expected, and the proposed RSB and SB are comparable to these ideal methods. 
In scenarios 3-8, however, large outliers are included in the simulated data, and the MSEs of the ZIP and ZINB models rapidly increase, whereas those of the proposed RSB and SB methods remain almost the same, which clearly indicates the strong robustness of the proposed method.
Comparing RSB and SB, it is observed that the performance of RSB tends to be better than that of SB under existence of a considerable number of outliers.
This would be because SB does not hold posterior robustness unlike RSB as discussed in Section~\ref{sec:pos}.
It is also observed that the amount of improvement of the proposed method in terms of MSE is more substantial under $y_o=50$ than under $y_o=20$.
In terms of interval estimation, we can observe similar tendencies in MSE and IS; RSB provides stable performance regardless of the existence of zero inflation or outliers, while the interval estimation using the other methods is significantly worse in the presence of outliers. 
This observation is consistent with Theorem~\ref{thm:robust}, which guarantees the robustness of the entire posterior distribution, including the credible intervals.  
We can also observe from the results of the IS that the over-dispersed models (NB and ZINB) provide better interval estimation than PR and ZIP in the presence of outliers, implying the importance of heavily tailed error distribution in uncertainty quantification. However, as the RSB model performs better than these methods in terms of IS, introducing over-dispersion is not necessarily successful in dealing with local outliers.
This limitation of the robustness of NB is consistent with our theoretical result given in Theorem \ref{thm:robust}; the negative binomial distribution is not heavily tailed enough to account for large outliers. 
Although we adopted the default choice of the hyperparameter $(a,b)=(1/2, 1/2)$ in RSB, the results would not be sensitive to the choice of these hyperparameters since posterior robustness holds regardless of hyperparameters.
To show such insensitivity, we provide results of RSB under various choices of hyperparameters in Supplementary Materials.

\begin{figure}[!htb]
\centering
\includegraphics[width=12.5cm,clip]{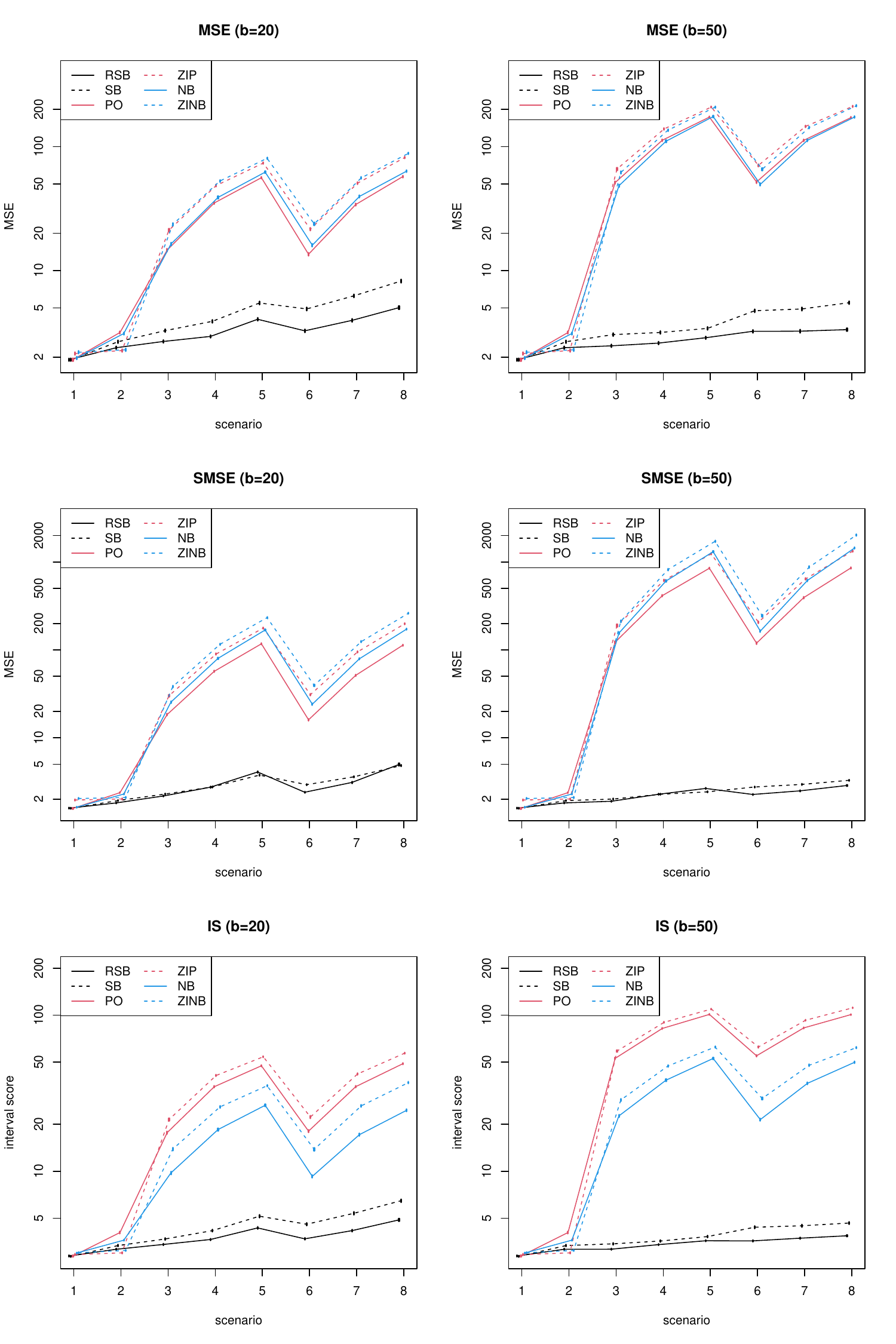}
\caption{Simulated mean squared errors (MSE), scaled mean squared errors (SMSE) and average values of interval score (IS) of the Poisson regression (PO), negative binomial regression (NB), zero-inflated Poisson (ZIP) regression, zero-inflated negative binomial (ZINB) regression, and RSB and SB regression models under 8 scenarios. 
The vertical lines are approximated $99\%$ confidence intervals of the MSE, SMSE and IS based on the estimated Monte Carlo errors. 
\label{fig:sim-nst}
}
\end{figure}

\subsection{Robust locally adaptive smoothing}
We next consider locally adaptive smoothing in the presence of outliers.
Suppose that $n$ count observations $y_i, \ i=1,\ldots,n$, are independent, and its expectation can be expressed as $f(x_i)$ as a function of some covariate $x_i$.
To estimate $f(\cdot)$, we assume that $y_i\sim {\rm Po}(\eta_i\lambda _i)$ and $\log \lambda _i =  f(x_i) $, where $\eta_i$ follows RSB mixture distribution (\ref{RSB}). 
For simplicity, we further assume that $x_i=i$ (i.e., the covariate is equally spaced).
Then, by letting $\theta_i=f(x_i)$, we introduce the following hierarchical model for adaptive smoothing: 
\begin{equation}\label{Po-TF}
\begin{split}
&y_i|\theta_i\sim {\rm Po}(\eta_i e^{\theta_i}), \ \ \ i=1,\ldots,n,\\ &D^k\theta_j\sim {\rm HS}(\tau^2),\ \ \ j=1,\ldots,n-k,
\end{split}
\end{equation}
where $D^k$ is the order-$k$ forward differences, (e.g. , $D^1\theta_j=\theta_{j+1}-\theta_j$) and HS$(\tau^2)$ is the horseshoe (HS) prior \citep{carvalho2010horseshoe} with an unknown global scale parameter $\tau^2$.
Note that \cite{faulkner2018locally} adopted an approximate version of the HS prior for $D^k\theta_j$; however, in (\ref{Po-TF}), we apply the original HS prior.
We note that the independent prior for $D^k\theta_j$ leads to a rank-deficient multivariate prior for $(\theta_1,\ldots,\theta_n)$.
Here, $\eta_i$ in (\ref{Po-TF}) is an independent random variable following the RSB mixture distribution to absorb zero inflation and outliers.
We adopt the half-Cauchy prior for $\tau$.
Then, the HS prior in (\ref{Po-TF}) can be expressed in the following hierarchy (e.g., \citealt{makalic2015simple}): for $j=1,\ldots,n-k$, 
\begin{equation*}
    D^k\theta_j\sim N(0, \phi_j\tau^2), \ \ \ \ \phi_j|\psi_{j}\sim {\rm IG}(1/2, 1/\psi_{j}), \ \ \ \ \psi _j \sim {\rm IG}(1/2, 1), 
\end{equation*}
and 
\begin{equation*}
\tau^2|\gamma\sim {\rm IG}(1/2, 1/\gamma), \ \ \ \ \gamma\sim {\rm IG}(1/2, 1).   
\end{equation*}

We conduct simulation studies to investigate the performance of the proposed model and a non-robust counterpart with $\eta_i=1$ in (\ref{Po-TF}), corresponding to the standard locally adaptive smoothing for count data.  
The following four scenarios for the true trend functions are considered. 
\begin{align*}
&{\rm  Scenario\ 1}:  \exp(\theta_i)=5,  \\
&{\rm  Scenario\ 2}:  \exp(\theta_i)=5-2I(i\geq 20)+5I(i\geq 40)-4I(i\geq 60), \\
&{\rm  Scenario\ 3}:  (e^{\theta_1},\ldots,e^{\theta_n})\sim N(5, \Sigma), \ \ \Sigma_{j,k}= 2\exp\left\{-\frac{(j-k)^2}{200}\right\},\\
&{\rm Scenario\ 4}: \exp(\theta_i)=5+\frac{5}{2}\sin\Big(\frac{4i}{n}-2\Big)+5\exp\Big\{-30\Big(\frac{4i}{n}-2\Big)^2\Big\}.
\end{align*}
The four scenarios correspond to constant, piecewise constant, smooth trend, and varying smoothness. This setting is almost the same as that adopted in \cite{faulkner2018locally}.
The genuine observations are generated from ${\rm Po}(e^{\theta_i})$. Then, for each observation, we add 40 with a probability of 0.1, or set it to 0 with a probability of 0.05, to create large outlying counts and zero inflation. 
We apply the robust method with the RSB mixture, SB mixture and standard method using the negative binomial (NB) models; we used $k=2$ (i.e., second-order differences) in all scenarios.

Using 1000 posterior samples after discarding the first 1000 samples of the proposed robust method with the RSB mixture as well as the SB and NB models, we computed the posterior mean and point-wise $95\%$ credible intervals of $e^{\theta_i}$.
The results are shown in Figure \ref{fig:sim-TF}.
It is observed that the NB model is sensitive to outliers, which prevents stable estimation of the true signals. 
In contrast, by introducing the RSB mixture to absorb potential outliers and zero inflation, the proposed method can produce reasonable point estimates and credible intervals that effectively cover the true function. 
The SB mixture provides similar, but slightly different posterior estimates to those of the RSB mixture.

\begin{figure}[!htb]
\centering
\includegraphics[width=14cm,clip]{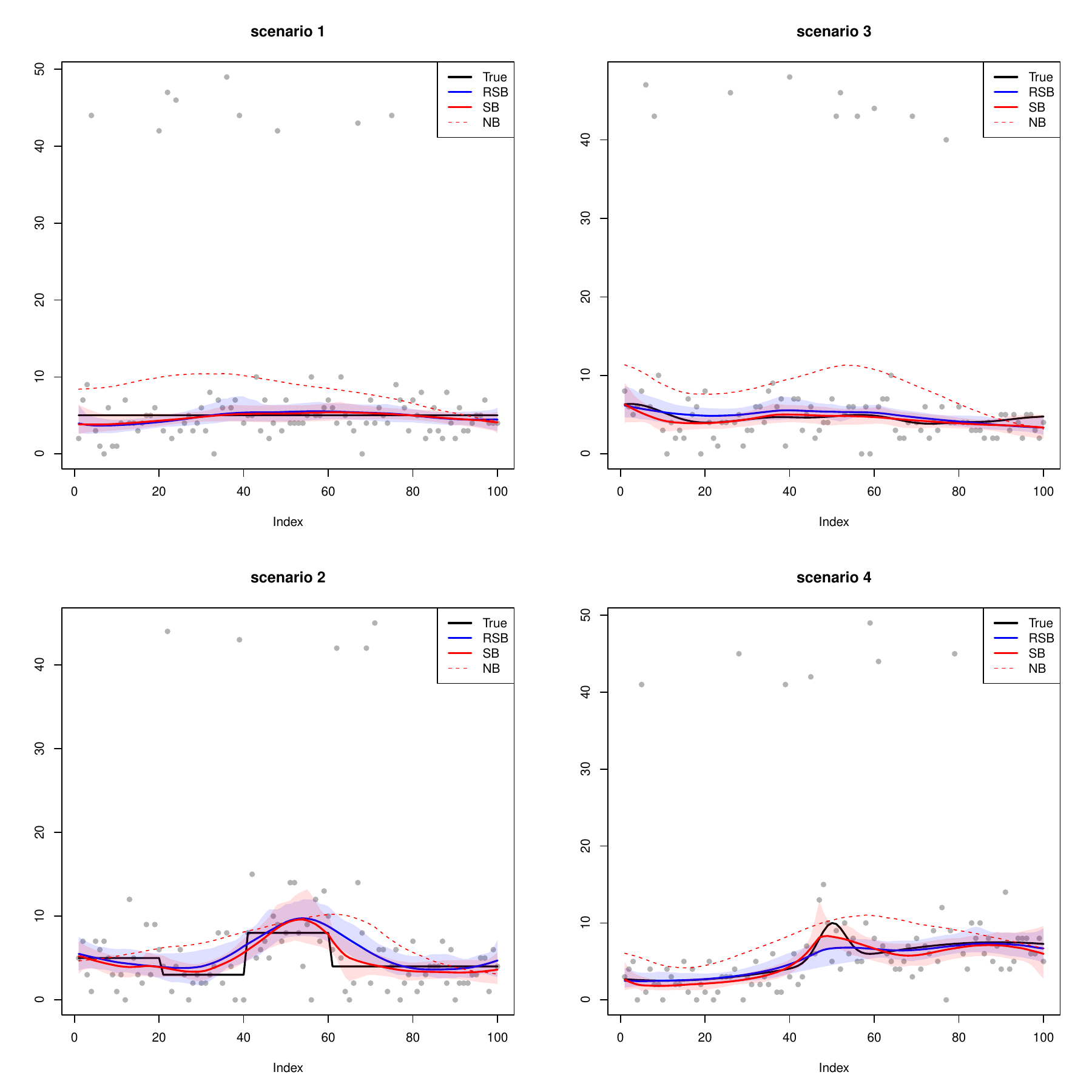}
\caption{The posterior medians of the underlying signals based on Bayesian trend filtering with negative binomial (NB), SB and RSB distributions. The shaded regions are point-wise $95\%$ credible intervals of the SB and RSB models.
\label{fig:sim-TF}
}
\end{figure}

\subsection{Robust hierarchical count regression with spatial effects}
Next, we apply spatial count regression models with the RSB mixture distribution to analyze crime data. 
The response variable is the number of violent crimes in $n=2791$ local towns (equipped with the longitude and latitude of the main location) in the Tokyo metropolitan area in 2017, available at ``GIS database of number of police-recorded crime at O-aza, chome in Tokyo, 2009-2017'' of University of Tsukuba (\url{https://commons.sk.tsukuba.ac.jp/data_en}).
For auxiliary information about each town, we adopted area (km$^2$), population densities at noon and night, density of foreign people, percentage of single-person households, and average duration of residence, which can be constructed from census 2015, available at Japanese Government Statistics (\url{https://www.e-stat.go.jp/en}) and Statistics of Tokyo (\url{https://www.toukei.metro.tokyo.lg.jp}).
The processed dataset is provided in the Supplementary Material. 
For $i=1,\ldots,n$, let $y_i$ be the observed count of violent crimes, $a_i$ be the area, $x_i$ be the five-dimensional vector of standardized auxiliary information, and $s_i$ be the two-dimensional vector of location information in the $i$th local town.
In Figure \ref{fig:sp-sample}, we provide scatter plots of the observed number of crimes in each location; the right panel restricts the range to $(0,20)$.
The right panel shows that there are many zero counts, and we find that 37\% of $y_i$ is $0$. 
The left figure shows that there are some extremely large counts as potential outliers.

We employ the following spatial count model to estimate the structure of the number of crimes: 
\begin{equation}\label{crime-model}
y_i\sim {\rm Po}(\lambda_i \eta_i),  \ \ \log \lambda_i=x_i^{\top}\beta +\log a_i + \xi(s_i) + \log\gamma_i, \ \ \ i=1,\ldots,n,
\end{equation}
where $\eta_i$ follows RSB mixture (\ref{RSB}), $\gamma_i=1$, and $\xi_i$ is an unobserved spatial effect.
Note that $\log a_i$ is interpreted as an offset term, and $x_i^{\top}\beta$ and $\xi(s_i)$ can be seen as fixed and random effects on the logarithm of the crime rate per km$^2$.  
In the following, we refer to model (\ref{crime-model}) as the RSB model. 
We model spatial effect $\xi(s_i)$ using the predictive Gaussian process \citep{banerjee2008gaussian}; we set $\xi(s_i)=D_h(s_i)^\top\mu_{\xi}$, where $\mu_{\xi}$ is the standard Gaussian process on a set of fixed knots, $\kappa_1,\ldots,\kappa_M$, and $D_h(s_i)$ is an $M$-dimensional vector. 
Specifically, we assume that $\mu_{\xi}\sim N(0, \tau^{-1}H_h)$ with unknown precision parameter $\tau$ and variance--covariance matrix $(H_h)_{k\ell}=\nu_h(\kappa_k, \kappa_\ell)$.
Here, $D_h(s_i)$ is written as a regression, and $D_h(s_i)=H_h^{-1}V_h(s_i)$, where $V_h(s)=(\nu_h(s, \kappa_1), \ldots,\nu_h(s, \kappa_M))^\top\in \mathbb{R}^{M}$ and $\nu_h(s_1, s_2)$ is a correlation function with bandwidth $h$.
In this example, we use the Gaussian correlation function $\nu_h(s_1, s_2)=\exp\{-\|s_1-s_2\|^2/h^2\}$.
Then, the joint distribution of $\xi=(\xi(s_1),\ldots,\xi(s_n))^\top$ is $N(0,\tau^{-1}D_h^\top H_hD_h)$, where $D_h=(D_h(s_1),\ldots,D_h(s_n))$ is an $M\times n$ matrix.  
We set $M=100$, and the knot locations are set to the center points of $M$ clusters obtained by applying the k-means to the sampled locations $\{s_1,\ldots,s_n\}$. 
For comparison, we also consider the SB mixture ($\eta_i$ follows the SB mixture with $(a,b)=(1/2, 1/10)$ and $\gamma_i=1$) and the negative binomial (NB) regression ($\eta _i = 1$, $\gamma _i\sim {\rm Ga}(\nu,1)$ and $\nu \sim \mathrm{Ga}(1,1)$).

We generated 10000 posterior samples after discarding the first 5000 samples for the three models.
In Table~\ref{tab:sp-criterion}, we report the DIC \citep{spiegelhalter2002bayesian}, WAIC \citep{watanabe2010asymptotic} and PPL \citep{gelfand1998model} of the three models. 
The results show that the three comparison measures of RSB are considerably smaller than those of the NB model, which shows that the proposed models fit substantially better to the dataset.
Since the measures of RSB are smaller than those of SB, we report the results of RSB only in what follows. 
The posterior medians of $\eta_i$ and the trace plot of $s$ under RSB are shown in Figure~\ref{fig:sp-eta}.
The posterior distribution of $s$ is located away from $0$, and we observed several posterior medians of $\eta_i$ with values of more than $10$ or less than $0.1$. As evident from these results, under the RSB model, some observations are explained by error term $\eta_i$, not by auxiliary information or local spatial effects, implying possible abnormal zeros and outliers in this dataset. 
The posterior means of spatial effects $\xi_i$ in (\ref{crime-model}) are presented in Figure~\ref{fig:sp-effect}, along with locations whose posterior medians of $\eta_i$ are larger than $10$ or smaller than $0.1$. 
It can be observed that, under the NB model, the spatial effect is greatly influenced by outliers. In other words, the NB model explains a small number of large counts in the dataset by the spatial effect terms as a regional phenomenon. 
The problem of fitting the spatial effect to outliers is the spillover of extremely high estimates, for which the estimates in the neighboring areas of outliers are biased.
Therefore, it is more appropriate to assume that the large counts should not be explained by smoothly varying spatial effects, but by local random variables such as $\eta_i$ in the RSB model. 
In fact, the proposed RSB model can successfully eliminate information from extremely large counts, as evidenced by the spatially smoothed estimates of the spatial effect. 
The posterior means and $95\%$ credible intervals of the regression coefficients $(\beta _0,\dots,\beta_5)$, precision parameter of the Gaussian process $\tau$, and bandwidth of the correlation function $h$ are reported in Table~\ref{tab:sp-beta}. 
We can confirm that the RSB and NB models report different posteriors for some regression coefficients. 
For example, the estimates of $\beta_1$ (coefficients of night population density) are underestimated in the NB model compared with the RSB model, possibly because of the outliers. 
Further, the same phenomenon happens for $\beta_5$.
We also note that the posterior means of the spatial bandwidth parameter $h$ in the NB model is much smaller than that of the RSB model.
This is because under the NB model, the spatial effect term $\xi_i$ is affected by outliers and zero inflation, and the spatial effect is forced to change abruptly over the space, as shown in Figure~\ref{fig:sp-effect}, with a small bandwidth value.

\begin{table}[htbp!]
\caption{
The deviance information criterion (DIC), widely applicable information criterion (WAIC) and posterior predictive loss (PPL) of the three models. 
}
\label{tab:sp-criterion}
\begin{center}
\begin{tabular}{ccccccccccc}
\hline
 && RSB && SB && NB \\
\hline
DIC && 8518 && 8991 && 9568  \\
WAIC &&  1.66 && 1.74 && 1.75 \\
PPL && 4.35 &&  4.75 && 43.11  \\
\hline
\end{tabular}
\end{center}
\end{table}

\begin{figure}[!htb]
\centering
\includegraphics[width=14cm,clip]{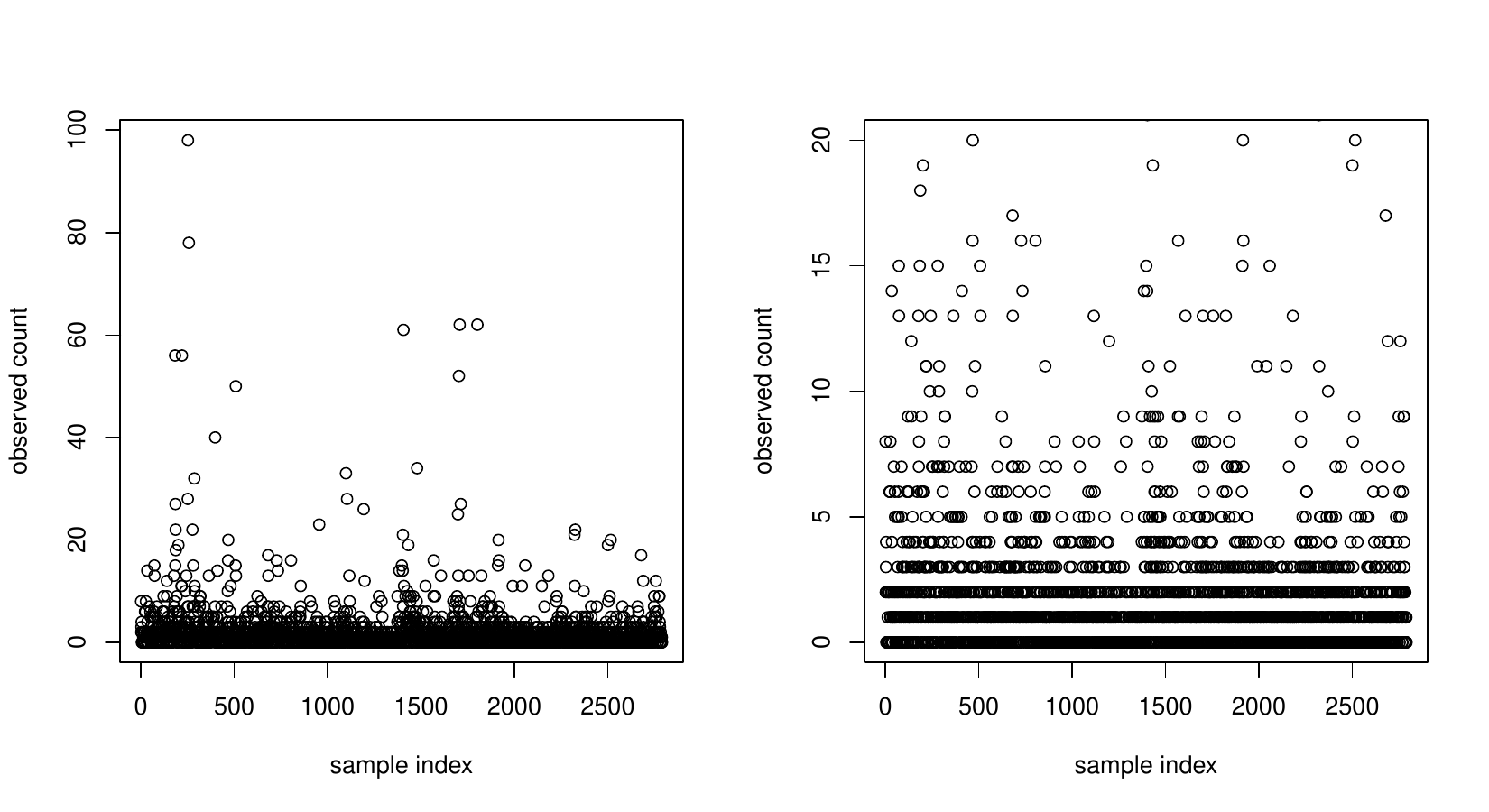}
\caption{Scatter plot of the observed number of crimes, where the range of the right figure is restricted to $(0, 20)$.
\label{fig:sp-sample}
}
\end{figure}

\begin{figure}[!htb]
\centering
\includegraphics[width=14cm,clip]{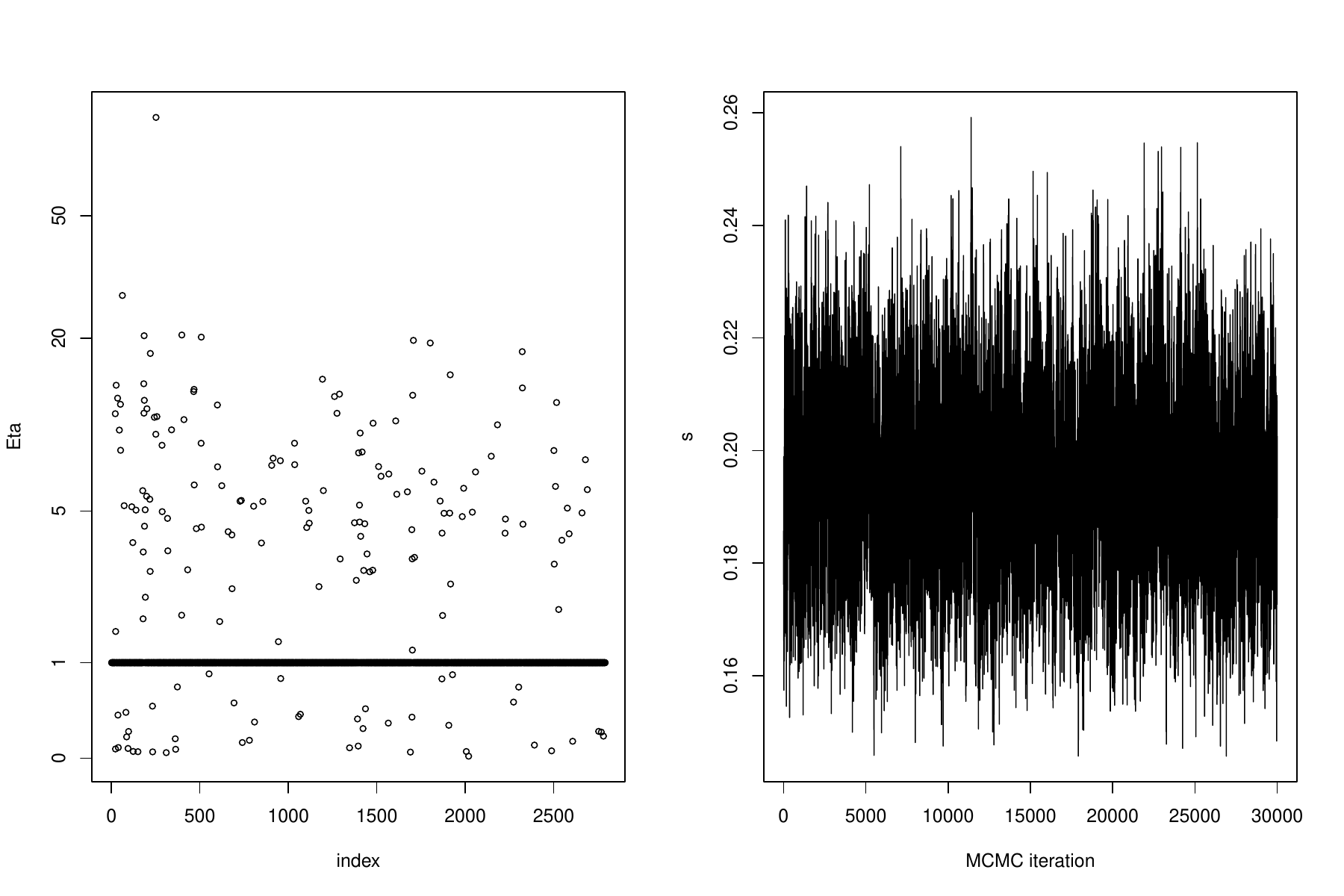}
\caption{The posterior median of $\eta_i$ (left) and the trace plot of the mixing proportion $s$ in the RSB mixture distribution (right). 
\label{fig:sp-eta}
}
\end{figure}

\begin{figure}[!htb]
\centering
\includegraphics[width=15cm,clip]{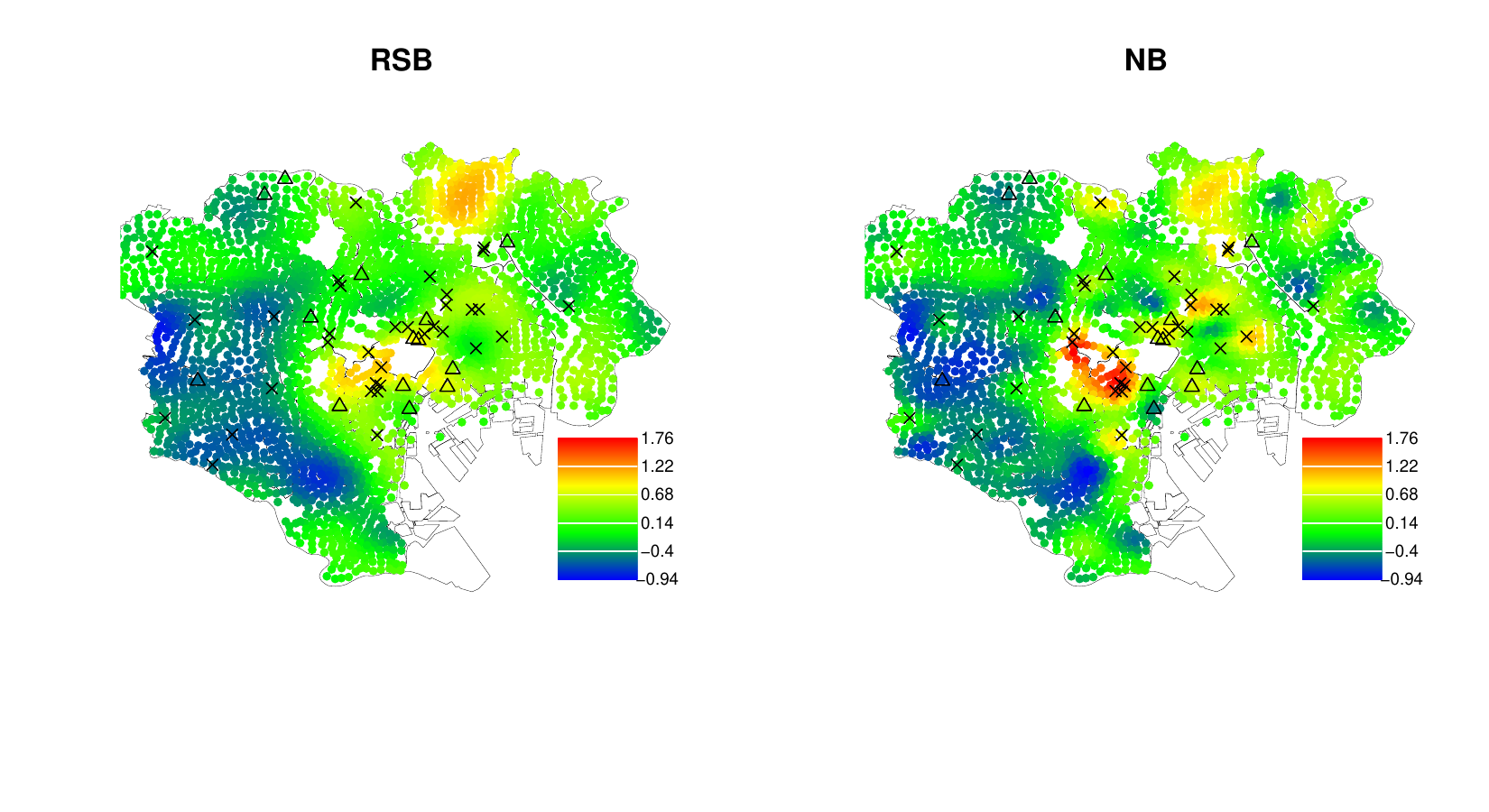}
\caption{The posterior means of the spatial effects under the RSB and NB models. 
The sampling points where the posterior median of $\eta_i$ under the RSB model are greater than 10 or smaller than $0.1$ are indicated by $\times$ and $\triangle$, respectively. 
}
\label{fig:sp-effect}
\end{figure}

\begin{table}[htbp!]
\caption{Posterior means and $95\%$ credible intervals of regression coefficients, scale, and bandwidth parameters in the spatial effects.
}
\label{tab:sp-beta}
\begin{center}
\begin{tabular}{lcccccccccc}
\hline
&&\multicolumn{3}{c}{RSB} & & \multicolumn{3}{c}{NB}\\
&&posterior & \multicolumn{2}{c}{credible interval} && posterior & \multicolumn{2}{c}{ credible interval}\\
&  & mean & Lower & Upper &  & mean & Lower & Upper \\
\hline
$\beta_0$ (Intercept) &  & 1.74 & 1.87 & 1.94 &  & 1.73 & 1.90 & 2.06 \\
$\beta_1$ (NPD) &  & 0.20 & 0.26 & 0.33 &  & 0.00 & 0.08 & 0.15 \\
$\beta_2$ (DPD) &  & -0.14 & -0.06 & 0.00 &  & -0.09 & 0.00 & 0.08 \\
$\beta_3$ (FD) &  & -0.01 & 0.05 & 0.11 &  & -0.01 & 0.06 & 0.14 \\
$\beta_4$ (SH) &  & 0.41 & 0.48 & 0.54 &  & 0.52 & 0.59 & 0.65 \\
$\beta_5$ (AYL) &  & -0.11 & -0.06 & -0.01 &  & -0.10 & -0.04 & 0.02 \\
$h$ &  & 0.31 & 0.41 & 0.50 &  & 0.17 & 0.22 & 0.30 \\
$\tau$ &  & 2.98 & 4.56 & 6.62 &  & 2.02 & 2.96 & 4.18 \\
\hline
\end{tabular}
\end{center}
{\footnotesize *NPD: night population density; DPD: daytime population density; FD: density of foreign people; SH: percentage of single-person households; AYL: average year of living }
\end{table}

\vspace{-1cm}

\section{Concluding remarks}\label{sec:remark}
This paper introduced a new approach to robust hierarchical modeling of counts in the presence of both zero inflation and large outliers.
A key tool of the proposed approach is a new family of distributions called RSB distributions, which are used to absorb the effects of zero inflation and outliers, making the estimation of model structures in the Poisson rate robust.
There are two notable features of the proposed approach. 
First, the posterior computation is simply carried out via a costumed Gibbs sampler. 
Second, we can demonstrate the theoretical posterior robustness, which guarantees valid posterior estimation as well as inference under the existence of outliers.

\section*{Acknowledgement}
This work was supported by the Japan Society for the Promotion of Science (JSPS KAKENHI) grant numbers 20J10427, 18H03628, and 21H00699.

\vspace{1cm}
\bibliographystyle{chicago}
\bibliography{ref}

\newpage
\setcounter{page}{1}
\setcounter{equation}{0}
\renewcommand{\theequation}{S\arabic{equation}}
\setcounter{section}{0}
\renewcommand{\thesection}{S\arabic{section}}
\setcounter{lem}{0}
\renewcommand{\thelem}{S\arabic{lem}}
\setcounter{thm}{0}
\renewcommand{\thethm}{S\arabic{thm}}
\setcounter{table}{0}
\renewcommand{\thetable}{S\arabic{table}}
\setcounter{figure}{0}
\renewcommand{\thetable}{S\arabic{figure}}
\setcounter{prp}{0}
\renewcommand{\theprp}{S\arabic{prp}}
\setcounter{remark}{0}
\renewcommand{\theremark}{S\arabic{remark}}

\def\Ht{{\tilde H}}
\def\al{{\alpha}}
\def\be{{\beta}}
\def\ga{{\gamma}}
\def\de{{\delta}}
\def\ep{{\varepsilon}}
\def\la{{\lambda}}
\def\si{{\sigma}}
\def\om{{\omega}}
\def\th{{\theta}}
\def\fai{{\varphi}}
\def\ka{{\kappa}}
\def\ze{{\zeta}}
\def\Th{{\Theta}}
\def\De{{\Delta}}
\def\Si{{\Sigma}}
\def\Ga{{\Gamma}}
\def\Om{{\Omega}}
\def\Up{{\Upsilon}}
\def\La{{\Lambda}}

\begin{center}
{\LARGE {\bf Supplementary Material for ``Robust Bayesian Modeling of Counts under Zero-inflation and Outliers: Theoretical Robustness and Efficient Computation"}}
\end{center}

\vspace{1cm}

This Supplementary Material provides technical proofs of Theorems 1-3 and Proposition 1, conditions for posterior robustness, detailed posterior computation algorithms for models used in Section~4 and additional numerical results. 

\section{Proofs and derivations for Section~1}

\subsection*{Proof of Theorem~1}

We first review the RSB density provided in Equation~(1) of the main text: 
\begin{equation*}
\pi_{\rm RSB}(\eta;a,b)=\frac{1}{B(a,b)}\frac{\{\log(1+\eta)\}^{a-1}}{1+\eta}\frac{1}{\{1+\log(1+\eta)\}^{a+b}}, \ \ \ \eta>0.
\end{equation*}
This density is defined for any positive $a$ and $b$, but in Theorem~1 we assume that $0<a<1$. 
The equation we want to show is Equation~(3) of the main text: 
\begin{equation*}
\begin{split}
\pi _{\rm RSB}(\eta;a,b) &= \int_{\mathbb{Re}_+^3} {\rm Ex}(\eta|u) \ {\rm Ga}(u|v+w,1) \ g(v,w) \ dudvdw \\
&= \int_{\mathbb{Re}_+^3} ue^{-u\eta} \ \frac{u^{v+w-1}e^{-u} }{\Gamma (v+w)} \ \frac{1}{B(a,b)\Gamma (1-a)\Gamma (a+b)} \frac{v^{-a}w^{a+b-1}e^{-w} }{ v+w } dudvdw.
\end{split}
\end{equation*}
To prove this, we utilize the integral representation of the normalizing constant of the gamma distribution. That is, for any positive $\alpha$ and $\beta$, we have
\begin{equation*}
    \beta ^{-\alpha} = \int _0^{\infty} \frac{t^{\alpha -1} e^{-\beta t} }{\Gamma(\alpha)} dt.
\end{equation*}
We apply this formula three times to the RSB density. First, we augment the term of $\{ 1+\log(1+u) \}$ as 
\begin{equation*}
\begin{split}
        \{ 1+\log(1+\eta) \}^{ -(a+b) } &= \int _0^{\infty} \frac{ w^{(a+b)-1} e^{ -w \{ 1+\log(1+\eta) \}  } }{ \Gamma (a+b) }  dw \\
        &= \int _0^{\infty} (1+\eta)^{-w} \frac{ w^{(a+b)-1} e^{ -w } }{ \Gamma (a+b) }  dw.
\end{split}
\end{equation*}
Second, we compute the term of $\log(1+\eta)$ in a similar way: 
\begin{equation*}
\begin{split}
        \{ \log(1+\eta) \}^{ -(1-a) } &= \int _0^{\infty} \frac{ v^{(1-a)-1} e^{ -v \{ \log(1+\eta) \}  } }{ \Gamma (1-a) }  dv \\
        &= \int _0^{\infty} (1+\eta)^{-v} \frac{ v^{-a} }{ \Gamma (1-a) }  dv.
\end{split}
\end{equation*}
Note that $1-a>0$ since we have assumed that $0<a<1$. Last, the term of $(1+\eta)$ is augmented as 
\begin{equation*}
\begin{split}
        (1+\eta) ^{ - (v+w+1) } &= \int _0^{\infty} \frac{ u^{(v+w+1)-1} e^{ - (1+\eta) u } }{ \Gamma (v+w+1) }  du \\
        &= \int _0^{\infty} ue^{-u\eta} \frac{ u^{(v+w)-1} e^{-u} }{ \Gamma (v+w) } \frac{1}{v+w} du.
\end{split}
\end{equation*}
Combining these three expressions, we obtain the desired result.

\subsection*{Full conditional of $(u,v,w)$}

Theorem~1 claims that the RSB distribution is realized as the marginal distribution of the following hierarchical model:
\begin{equation*}
    \begin{split}
        (\eta|u,v,w) &\sim {\rm Ex}(u) \\
        (u|v,w) &\sim {\rm Ga}(v+w,1) \\
        (v,w) &\sim g(v,w).
    \end{split}
\end{equation*}
In this model, the joint density of $(u,v,w)$ conditional on $\eta$ can be computed by backtracking the augmentation made in the proof of Theorem~1:
\begin{equation*}
    \begin{split}
        \pi(u,v,w|\eta) &\propto \pi (u,v,w,\eta) \\
        & = ue^{-u\eta} \ \frac{u^{v+w-1}e^{-u} }{\Gamma (v+w)} \ \frac{1}{B(a,b)\Gamma (1-a)\Gamma (a+b)} \frac{v^{-a}w^{a+b-1}e^{-w} }{ v+w } \\
        &\propto \frac{u^{(v+w+1)-1}e^{-(1+\eta)u} }{\Gamma (v+w+1)} \ v^{-a}w^{a+b-1}e^{-w} \\
        &\propto \frac{u^{(v+w+1)-1}e^{-(1+\eta)u} }{\Gamma (v+w+1) (1+\eta)^{-(v+w+1)} } \ e^{-(v+w+1)\log(1+\eta)} v^{-a}w^{a+b-1}e^{-w} \\
        &\propto {\rm Ga} (u|v+w+1,1+u) \ v^{(1-a)-1} e^{-v\log(1+\eta)} \ w^{a+b-1}e^{-w \{ 1+\log(1+\eta) \} } \\
        &\propto {\rm Ga} (u|v+w+1,1+u) \ {\rm Ga} (v|1-a,\log(1+\eta) ) \ {\rm Ga} (w | a+b,  1+\log(1+\eta) ).
    \end{split}
\end{equation*}
That is, the joint distribution of interest is decomposed as: $\pi(u,v,w|\eta) = \pi (u|v,w,\eta) \pi (v,w|\eta)$, where $\pi (u|v,w,\eta)$ is the gamma distribution. This expression also implies the conditional independence: $\pi( v,w | \eta) = \pi (v|\eta) \pi (w|\eta)$, where $\pi (v|\eta)$ and $\pi (w|\eta)$ are also the gamma distributions. In simulating $(u,v,w)$ from this full conditional, one should
\begin{itemize}
    \item Generate $v \sim {\rm Ga} (v|1-a,\log(1+\eta) )$ and  $w \sim {\rm Ga} (w | a+b,  1+\log(1+\eta) )$ independently. 
    
    \item Then, using the simulated values of $v$ and $w$, generate $u \sim {\rm Ga} (u|v+w+1,1+\eta)$.
\end{itemize}
This is exactly the last step of the Gibbs sampler of Algorithm~1 in Section~3.2.

\subsection*{Moments}

For $\eta|\tilde{v}\sim {\rm Ex}(\tilde{v})$, we have $E[\eta|\tilde{v}] = 1 / \tilde{v}$. In the hierarchical expression of ${\rm SB}(a,b)$, we have $\eta | \tilde{v} \sim {\rm Ga}(b,\tilde{w})$ and $\tilde{w}\sim {\rm Be} (a,1-a)$. Then, the conditional expectation of $\eta $ is 
\begin{equation*}
    E[\eta|\tilde{w}] =  E[ \tilde{v}^{-1}| \tilde{w} ] = \int _0^{\infty} \frac{\tilde{v}^{(b-1)-1}e^{-\tilde{w}\tilde{v}} }{ \Gamma (b) \tilde{w}^{-b} } d\tilde{v} = \begin{cases}
    \infty \ \ \ \ \ \ \mathrm{if} \ b\le 1 \\
    b^{-1} \tilde{w} \ \ \  \mathrm{if} \ b> 1,
    \end{cases}
\end{equation*}
and, for $b>1$, $E[\eta] = a/b$. That is, the first-order moment of the SB distribution exists if and only if $b>1$. 

By contrast, when $\eta \sim {\rm RSB}(a,b)$, we have $\eta|u\sim {\rm Ex}(u)$, $u|\tilde{v} \sim {\rm Ga}(\tilde{v},1)$ and $\tilde{v}$ follows some distribution on $(0,\infty)$. Then, the moment is computed as 
\begin{equation*}
    E[\eta |\tilde{v}] = E[u^{-1}|\tilde{v}] = \int _0^{\infty} \frac{u^{(\tilde{v}-1)-1}e^{-u} }{ \Gamma (\tilde{v}) } d\tilde{v} = \begin{cases}
    \infty \ \ \ \ \ \ \mathrm{if} \ \tilde{v}\le 1 \\
    \tilde{v} \ \ \ \ \ \ \ \mathrm{if} \ \tilde{v}> 1.
    \end{cases}
\end{equation*}
Since the density of $\tilde{v}$ is strictly positive at any $\tilde{v}\le 1$, the expectation of the expression above in terms of $\tilde{v}$ diverges. Thus, the first-order moment of $\eta$ does not exist, for any choice of $(a,b)$. This property reflects the super heavy tail of the RSB density.

\section{Proof of Theorem 2} 
Let $G_A (u) = (1 - s) {\rm{Ga}} (u | A, A) + s \pi _{\rm{RSB}} (u; a, b)$, $u \in (0, \infty )$, for $A > 0$. 
Then for any $\la , \De > 0$, by the covariance inequality, we have 
\begin{align}
\{ \Pr( \eta _i<\ep|y_i=0; a, b) \} |_{\la _i = \la } 
&= \lim_{A \to \infty } \frac{ \int_{0}^{\infty } \chi _{(0, \ep )} (u) e^{- \la u} G_A (u) du }{ \int_{0}^{\infty } e^{- \la u} G_A (u) du } \non \\
&\le \lim_{A \to \infty } \frac{ \int_{0}^{\infty } \chi _{(0, \ep )} (u) e^{- \De u} e^{- \la u} G_A (u) du / \int_{0}^{\infty } e^{- \la u} G_A (u) du }{ \int_{0}^{\infty } e^{- \De u} e^{- \la u} G_A (u) du / \int_{0}^{\infty } e^{- \la u} G_A (u) du } \non \\
&= \{ \Pr( \eta _i<\ep|y_i=0; a, b) \} |_{\la _i = \la + \De } \text{.} \non 
\end{align}
Therefore, $\Pr( \eta _i < \ep | y_i=0; a, b)$ is a nondecreasing function of $\la _i$. 
Next, for $0 < a < 1 < a'$, let $a''$ be either $a$ or $a'$. 
Suppose that $s \in (0, 1]$. 
Then, since 
\begin{align}
&\{ \Pr( \eta _i<\ep|y_i=0; a'' , b) \} |_{\la _i = \la } \non \\
&= \lim_{A \to \infty } \frac{ (1 - s) \{ A^A / \Ga (A) \} \int_{0}^{\ep } u^{A - 1} e^{- ( \la + A) u} du + s \int_{0}^{\ep } e^{- \la u} \pi _{\rm{RSB}} (u; a'', b) d u }{ (1 - s) \{ A^A / \Ga (A) \} \int_{0}^{\infty } u^{A - 1} e^{- ( \la + A) u} du + s \int_{0}^{\infty } e^{- \la u} \pi _{\rm{RSB}} (u; a'' , b) du } \non \\
&= \lim_{A \to \infty } \frac{ (1 - s) \{ A^{A - 1 / 2} e^{- A} / \Ga (A) \} \int_{0}^{\ep } \sqrt{A} (u e^{1 - u} )^{A - 1} e^{1 - ( \la + 1) u} du + s \int_{0}^{\ep } e^{- \la u} \pi _{\rm{RSB}} (u; a'' , b) du }{ (1 - s) A^A / ( \la + A)^A + s \int_{0}^{\infty } e^{- \la u} \pi _{\rm{RSB}} (u; a'' , b) du } \non \\
&= \frac{ s \int_{0}^{\ep } e^{- \la u} \pi _{\rm{RSB}} (u; a'' , b) du }{ (1 - s) e^{- \la } + s \int_{0}^{\infty } e^{- \la u} \pi _{\rm{RSB}} (u; a'' , b) du } \non 
\end{align}
for $0 < \ep < 1 / e$ by Stirling's formula and the dominated convergence theorem, it follows that 
\begin{align}
&{(1 - s) e^{- \la } + s \int_{0}^{\infty } e^{- \la u} \pi _{\rm{RSB}} (u; a, b) du \over (1 - s) e^{- \la } + s \int_{0}^{\infty } e^{- \la u} \pi _{\rm{RSB}} (u; a' , b) du} {\Pr( \eta _i<\ep|y_i=0; a, b) \over \Pr( \eta _i<\ep|y_i=0; a' , b)} \Big| _{\la _i = \la } \non \\
&= {\int_{0}^{\ep } e^{- \la u} \pi _{\rm{RSB}} (u; a, b) du \over \int_{0}^{\ep } e^{- \la u} \pi _{\rm{RSB}} (u; a' , b) du} \non \\
&= {B( a' , b) \over B(a, b)} {\int_{0}^{\ep } e^{- \la u} \{ \log (1 + u) \} ^{a - 1} (1 + u)^{- 1} \{ 1 + \log (1 + u) \} ^{- (a + b)} du \over \int_{0}^{\ep } e^{- \la u} \{ \log (1 + u) \} ^{a' - 1} (1 + u)^{- 1} \{ 1 + \log (1 + u) \} ^{- ( a' + b)} du} \non \\
&\ge {B( a' , b) \over B(a, b)} {\int_{0}^{\ep } \{ \log (1 + u) \} ^{a - 1} du / \int_{0}^{\ep } \{ \log (1 + u) \} ^{a' - 1} du \over e^{\la \ep } (1 + \ep ) \{ 1 + \log (1 + \ep ) \} ^{a + b} } \non \\
&\ge {B( a' , b) \over B(a, b)} {\{ \log (1 + \ep ) \} ^{a - a'} \over e^{\la \ep } (1 + \ep ) \{ 1 + \log (1 + \ep ) \} ^{a + b} } \to \infty \non 
\end{align}
as $\ep \to 0$. 
This proves the desired result. 

The proof above is valid even if one replaces $\pi _{\mathrm{RSB}}(u;a,b)$ with $\pi _{\mathrm{SB}}(u;a,b)$. The last inequality becomes 
\begin{align}
{\int_{0}^{\ep } e^{- \la u} \pi _{\rm{SB}} (u; a, b) du \over \int_{0}^{\ep } e^{- \la u} \pi _{\rm{SB}} (u; a' , b) du} &= {B( a' , b) \over B(a, b)} {\int_{0}^{\ep } e^{- \la u} u^{a - 1} (1 + u)^{- (a+b)} du \over \int_{0}^{\ep } e^{- \la u} u^{a' - 1} (1 + u)^{- (a'+b)} du} \non \\
&\ge {B( a' , b) \over B(a, b)} {\int_{0}^{\ep } u ^{a - 1} du / \int_{0}^{\ep } u^{a' - 1} du \over e^{\la \ep } (1 + \ep ) ^{a + b} } \non \\
&\ge {B( a' , b) \over B(a, b)} { \ep ^{a - a'} \over e^{\la \ep } (1 + \ep ) ^{a + b} } \to \infty .
\end{align}

\section{Lemmas}\label{sec:lem}
We here provide some useful lemmas used in the proofs of Theorems \ref{thm:condition} and 3.
Let 
\begin{align}
f_{\pi } (y; z) &= \int_{0}^{\infty } \pi (u) {\rm{Po}} (y | e^z u) du \non \\
&= \int_{0}^{\infty } \pi (u) {( e^z u)^{y} \over y !} \exp (- e^z u) du = \int_{0}^{\infty } {\pi (u / e^z ) \over e^z} {u^{y} \over y !} \exp (- u) du \non 
\end{align}
for $y \in \mathbb{N} _0 = \{ 0, 1, 2, \dotsc \} $ and $z \in \mathbb{R}$ for integrable $\pi ( \cdot ) \colon (0, \infty ) \to (0, \infty )$.

\begin{lem}
\label{lem:pointwise_convergence} 
Suppose that 
\begin{align}
\pi (u) &\sim {1 \over (1 + u)^{1 + \de }} {1 \over \{ 1 + \log (1 + u) \} ^{1 + b}} \non 
\end{align}
as $u \to \infty $ for some $b, \de \in \mathbb{R}$. 
Then 
\begin{align}
{f_{\pi } (y; z) \over f_{\pi } (y; 0)} &\to e^{\de z} \non 
\end{align}
as $y \to \infty $ for all $z \in \mathbb{R}$. 
\end{lem}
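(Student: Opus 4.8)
Let $h(u) = (1+u)^{-(1+\delta)}\{1+\log(1+u)\}^{-(1+b)}$, so the hypothesis gives $\pi(u)/h(u)\to C$ for some $C\in(0,\infty)$, and put $\pi^{\sharp}(u) = \pi(u/e^z)/e^z$. Writing $g_y(u) = u^y e^{-u}/y!$ for the $\mathrm{Ga}(y+1,1)$ density, we have $f_{\pi}(y;0) = \int_0^{\infty}\pi(u)g_y(u)\,du$ and $f_{\pi}(y;z) = \int_0^{\infty}\pi^{\sharp}(u)g_y(u)\,du$, while the change of variables $v = u/e^z$ gives $\int_0^{\infty}\pi^{\sharp}=\int_0^{\infty}\pi<\infty$. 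The plan is to (i) identify the pointwise limit of $\pi^{\sharp}/\pi$, (ii) show that both integrals are controlled by their restrictions to large $u$, and (iii) conclude by a sandwich. For (i) I would write $\pi^{\sharp}(u)/\pi(u)$ as the product $\{\pi^{\sharp}(u)\big/(h(u/e^z)/e^z)\}\cdot\{(h(u/e^z)/e^z)\big/h(u)\}\cdot\{h(u)/\pi(u)\}$: the outer factors tend to $C$ and $1/C$ (using $\pi(w)/h(w)\to C$ along $w=u/e^z\to\infty$), while the middle factor equals $e^{-z}$ times $\{(1+u)/(1+u/e^z)\}^{1+\delta}$ times $\{(1+\log(1+u))/(1+\log(1+u/e^z))\}^{1+b}$, and these three pieces converge to $e^{-z}$, $e^{z(1+\delta)}$ and $1$ (the last because $\log(1+u/e^z)=\log(1+u)-z+o(1)$, so the quotient of the two ``$1+\log$'' terms tends to $1$). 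Hence $\pi^{\sharp}(u)/\pi(u)\to e^{\delta z}$ as $u\to\infty$; this computation is the only place the exponent $e^{\delta z}$ enters (and for the RSB tail, $\delta=0$, the limit is $1$).

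For (ii), fix $M>0$. Since $u\mapsto u^y e^{-u}$ increases on $(0,y)$, for $y>M$ we get $\int_0^M\pi g_y\le(M^y e^{-M}/y!)\,\|\pi\|_1$ and likewise $\int_0^M\pi^{\sharp}g_y\le(M^y e^{-M}/y!)\,\|\pi\|_1$, and by Stirling $M^y e^{-M}/y!=O\!\big(y^{-1/2}(eM/y)^y\big)$, which decays faster than any power of $y$. Against this I need a lower bound for $f_{\pi}(y;0)$: with $U_y\sim\mathrm{Ga}(y+1,1)$ one has $f_{\pi}(y;0)\ge\big(\inf_{[y,2y]}\pi\big)\Pr(U_y\in[y,2y])$, where $\Pr(U_y\in[y,2y])$ stays bounded away from $0$ --- e.g. via the Poisson--Gamma duality $\Pr(U_y\le y)=\Pr(\mathrm{Po}(y)\ge y+1)\to\tfrac12$ together with $\Pr(U_y>2y)\to0$ --- and $\inf_{[y,2y]}\pi\gtrsim(2y)^{-(1+\delta)}\{1+\log(1+2y)\}^{-(1+b)}$ by the hypothesis, so $f_{\pi}(y;0)\gtrsim y^{-(1+\delta)}(\log y)^{-(1+b)}$. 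Therefore $\int_0^M\pi g_y=o(f_{\pi}(y;0))$ and $\int_0^M\pi^{\sharp}g_y=o(f_{\pi}(y;0))$ as $y\to\infty$, for every fixed $M$.

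For (iii), given $\varepsilon>0$ choose $M$ so large that $|\pi^{\sharp}(u)/\pi(u)-e^{\delta z}|<\varepsilon$ whenever $u>M$; then $(e^{\delta z}-\varepsilon)\int_M^{\infty}\pi g_y\le\int_M^{\infty}\pi^{\sharp}g_y\le(e^{\delta z}+\varepsilon)\int_M^{\infty}\pi g_y$. Dividing $f_{\pi}(y;z)=\int_0^M\pi^{\sharp}g_y+\int_M^{\infty}\pi^{\sharp}g_y$ by $f_{\pi}(y;0)=\int_0^M\pi g_y+\int_M^{\infty}\pi g_y$, sending $y\to\infty$, and using (ii) --- the two $\int_0^M$ terms become negligible relative to $f_{\pi}(y;0)$, whence $\int_M^{\infty}\pi g_y/f_{\pi}(y;0)\to1$ --- gives $e^{\delta z}-\varepsilon\le\liminf_y f_{\pi}(y;z)/f_{\pi}(y;0)\le\limsup_y f_{\pi}(y;z)/f_{\pi}(y;0)\le e^{\delta z}+\varepsilon$; letting $\varepsilon\downarrow0$ finishes. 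The main obstacle is precisely the lower bound on $f_{\pi}(y;0)$ in (ii): one must turn ``the $\mathrm{Ga}(y+1,1)$ mass concentrates near $y$'' into an explicit inequality, for which the Poisson--Gamma duality plus an elementary central-limit estimate suffice; step (i) is a routine change of variables and step (iii) is a standard sandwich.
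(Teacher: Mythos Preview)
Your argument is correct and follows the same three-step skeleton as the paper's proof: establish the pointwise limit $\pi(u/e^z)/\{e^z\pi(u)\}\to e^{\delta z}$, show that the weight $u^y e^{-u}$ concentrates on large $u$, and finish with a sandwich. The one place where you diverge is step~(ii), which you flag as the main obstacle and handle via an explicit lower bound $f_\pi(y;0)\gtrsim y^{-(1+\delta)}(\log y)^{-(1+b)}$ using gamma concentration, the Poisson--Gamma duality, and Stirling. The paper bypasses all of this with a direct comparison: for either integrand $\rho\in\{\pi,\pi^\sharp\}$,
\[
\frac{\int_0^M \rho(u)\,u^y e^{-u}\,du}{\int_{M+1}^\infty \rho(u)\,u^y e^{-u}\,du}
\ \le\ \Big(\frac{M}{M+1}\Big)^{y}\,
\frac{\int_0^M \rho(u)\,e^{-u}\,du}{\int_{M+1}^\infty \rho(u)\,e^{-u}\,du}\ \longrightarrow\ 0,
\]
obtained simply by bounding $u^y\le M^y$ on $[0,M]$ and $u^y\ge(M+1)^y$ on $[M+1,\infty)$. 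This yields $\int_M^\infty/\int_0^\infty\to1$ immediately, with no need to estimate $f_\pi(y;0)$ from below or to invoke any distributional limit theorem. Your extra machinery is not wrong, just unnecessary; the paper's trick turns what you call the ``main obstacle'' into a one-line inequality.
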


\noindent
\begin{proof}
Fix $z \in \mathbb{R}$. 
We have 
\begin{align}
{f_{\pi } (y; z) \over f_{\pi } (y; 0) e^{\de z}} &= \frac{ \displaystyle \int_{0}^{\infty } {\pi (u / e^z ) \over e^{(1 + \de )z}} u^y e^{- u} du }{ \displaystyle \int_{0}^{\infty } \pi (u) u^y e^{- u} du } = \frac{ \displaystyle \int_{0}^{\infty } {\pi (u / e^z ) \over \pi (u) e^{(1 + \de )z}} \pi (u) u^y e^{- u} du }{ \displaystyle \int_{0}^{\infty } \pi (u) u^y e^{- u} du } \text{.} \non 
\end{align}
For any $M > 0$ and $j = 0, 1$, 
\begin{align}
0 &\le \int_{0}^{\infty } \Big\{ {\pi (u / e^z ) \over \pi (u) e^{(1 + \de )z}} \Big\} ^j \pi (u) u^y e^{- u} du / \int_{M}^{\infty } \Big\{ {\pi (u / e^z ) \over \pi (u) e^{(1 + \de )z}} \Big\} ^j \pi (u) u^y e^{- u} du - 1 \non \\
&\le \int_{0}^{M} \Big\{ {\pi (u / e^z ) \over \pi (u) e^{(1 + \de )z}} \Big\} ^j \pi (u) u^y e^{- u} du / \int_{M + 1}^{\infty } \Big\{ {\pi (u / e^z ) \over \pi (u) e^{(1 + \de )z}} \Big\} ^j \pi (u) u^y e^{- u} du \non \\
&\le \Big( {M \over M + 1} \Big) ^y \int_{0}^{M} \Big\{ {\pi (u / e^z ) \over \pi (u) e^{(1 + \de )z}} \Big\} ^j \pi (u) e^{- u} du / \int_{M + 1}^{\infty } \Big\{ {\pi (u / e^z ) \over \pi (u) e^{(1 + \de )z}} \Big\} ^j \pi (u) e^{- u} du \to 0 \non 
\end{align}
as $y \to \infty $. 
For any $y \in \mathbb{N} _0$, 
\begin{align}
0 &\le \Big| \int_{M}^{\infty } {\pi (u / e^z ) \over \pi (u) e^{(1 + \de )z}} \pi (u) u^y e^{- u} du / \int_{M}^{\infty } \pi (u) u^y e^{- u} du - 1 \Big| \non \\
&\le \int_{M}^{\infty } \Big| {\pi (u / e^z ) \over \pi (u) e^{(1 + \de )z}} - 1 \Big| \pi (u) u^y e^{- u} du / \int_{M}^{\infty } \pi (u) u^y e^{- u} du \non \\
&\le \sup_{u \in (M, \infty )} \Big| {\pi (u / e^z ) \over \pi (u) e^{(1 + \de )z}} - 1 \Big| \to 0 \non 
\end{align}
as $M \to \infty $, since 
\begin{align}
\lim_{u \to \infty } {\pi (u / e^z ) \over \pi (u) e^{(1 + \de )z}} &= \lim_{u \to \infty } {(1 + u)^{1 + \de } \over (e^z + u)^{1 + \de }} {\{ 1 + \log (1 + u) \} ^{1 + b} \over \{ 1 + \log (1 + u / e^z ) \} ^{1 + b}} = 1 \text{.} \non 
\end{align}
Thus, 
\begin{align}
{f_{\pi } (y; z) \over f_{\pi } (y; 0) e^{\de z}} \to 1 \non 
\end{align}
as $y \to \infty $. 
\end{proof}

\bigskip
The following result is proved in the proof of Lemma S6 of \cite{hamura2020log}.

\begin{lem}
\label{lem:log} 
For all $u, v \in (0, \infty )$, we have 
\begin{align}
{1 + \log (1 + u) \over 1 + \log (1 + u / v)} &\le 1 + \log (1 + v) \text{.} \non 
\end{align}
\end{lem}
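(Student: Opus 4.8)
The plan is to clear the strictly positive denominator $1 + \log(1+u/v)$ and reduce the claim to the equivalent product form
\[
1 + \log(1+u) \le \bigl(1 + \log(1+v)\bigr)\bigl(1 + \log(1+u/v)\bigr).
\]
First I would expand the right-hand side as $1 + \log(1+u/v) + \log(1+v) + \log(1+v)\log(1+u/v)$, so that after cancelling the common constant $1$ the task reduces to showing
\[
\log(1+u) \le \log(1+u/v) + \log(1+v) + \log(1+v)\log(1+u/v).
\]

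The key step is to combine the first two terms on the right via the product-to-sum identity, $\log(1+u/v) + \log(1+v) = \log\bigl((1+u/v)(1+v)\bigr)$, and to expand the product algebraically as $(1+u/v)(1+v) = 1 + u + v + u/v$. Since $v, u/v > 0$, this quantity is at least $1+u$, so monotonicity of the logarithm gives $\log(1+u) \le \log(1+u/v) + \log(1+v)$, which already accounts for the ``main part'' of the target inequality. The leftover cross term $\log(1+v)\log(1+u/v)$ is a product of two factors that are each strictly positive on $(0,\infty)^2$ (because $1+v > 1$ and $1+u/v > 1$), hence nonnegative; adding it back only increases the right-hand side, completing the chain.

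I do not expect a genuine obstacle in this argument: once the denominator is cleared and the product expanded, the inequality separates cleanly into a part governed by the product-to-sum identity and a manifestly nonnegative remainder. The only point demanding care is the bookkeeping of signs on the domain $(0,\infty)^2$—verifying that the denominator $1+\log(1+u/v)$ is positive, so that clearing it preserves the direction of the inequality, and that the cross term is nonnegative—both of which follow immediately from the positivity of $u/v$ and $v$.
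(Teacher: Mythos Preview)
Your proof is correct and uses essentially the same approach as the paper: both rely on the key inequality $\log(1+u) \le \log(1+u/v) + \log(1+v)$, obtained from $(1+u/v)(1+v) = 1+u+v+u/v \ge 1+u$. The paper organizes the algebra slightly differently---bounding the numerator first and then simplifying the resulting fraction, rather than clearing the denominator and expanding the product---but the substance is identical.
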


\begin{proof}
We have 
\begin{align}
{1 + \log (1 + u) \over 1 + \log (1 + u / v)} &\le{1 + \log (1 + u / v) + \log (1 + v) \over 1 + \log (1 + u / v)} \le 1 + \log (1 + v) \non 
\end{align}
and the result follows. 
\end{proof}

The following result is a special case of Lemma S4 of \cite{hamura2020log}. 
\begin{lem}
\label{lem:linear_0w} 
Let $m, p \in \mathbb{N}$. 
Let $Z = ( z_1 , \dots , z_m )^{\top}$ be an $m \times p$ matrix of observations (such that any set of its $p$ distinct row vectors is linearly independent). 
Suppose that $m \ge p$. 
Then there exist $R > 0$ and $\de > 0$ (which may depend on $Z$) such that 
\begin{align}
\prod_{i = 1}^{m} {1 \over 1 + | z_{i}^{\top} \be |} \le {1 \over (1 + \de | \be | )^{m - p + 1}} \non 
\end{align}
for all $\be \in \mathbb{R} ^p$ satisfying $| \be | \ge R$. 
\end{lem}

\section{Conditions for posterior robustness}
\label{sec:S_condition} 

We here show that the tail decay of the RSB distribution is essential for the posterior robustness property considered in Theorem~3.
To this end, we consider a more general integrable density which, for some $a > 0$, $b \ge -1$, and $\de \geq 0$, satisfies the following two properties: 
\begin{align}
\pi (u) &\sim u^{a - 1} \quad \text{as $u \to 0$} \text{,} \label{eq:ltail} \\
\pi (u) &\sim {1 \over (1 + u)^{1 + \de }} {1 \over \{ 1 + \log (1 + u) \} ^{1 + b}} \quad \text{as $u \to \infty $} \text{,} \label{eq:rtail} 
\end{align}
where $\pi (u) \sim h(u)$ means that $\pi (u) / h(u) \to C$ for some $C > 0$.
Note that $\delta$ controls the tail decay of the density $\pi(u)$; that is, the tail gets lighter as $\delta$ increases. 
The proposed RSB distribution corresponds to the case with $\delta=0$ and $b>0$ and has the heaviest tail in the above family. The SB distribution is also included in this class of distributions as the case of $\delta > 0$ and $b = -1$. 
Then our framework is summarized as follows: 
\begin{align}
&y_i \sim {\rm{Po}} ( y_i | \exp ( x_{i}^{\top} \be + g_{i}^{\top} \xi _i ) u_i ) \text{,} \quad u_i \sim \pi ( u_i ) \text{,} \quad ( \be , \xi ) \sim \pi _{\be , \xi } ( \be , \xi ) \text{;} \non 
\end{align}
the observation $y_i$ is fixed for $i \in \Kc \subset \{ 1, \dots , n \} $ while $y_i = y_i ( \om ) \to \infty $ as $\om \to \infty $ for $i \in \Lc = \{ 1, \dots , n \} \setminus \Kc $. 

An explicit form of the limiting distribution of $( \be , \xi ) | y$ as $\om \to \infty $ is derived in the following theorem.

\begin{thm}
\label{thm:condition} 
Suppose that the prior distribution of $( \beta , \xi )$ is multivariate normal. 
Then we have 
\begin{align}
p( \be , \xi | y) \to \frac{ p( \be , \xi | y _{\Kc } ) \prod_{i \in \Lc } e^{\de ( x_{i}^{\top} \be + g_{i}^{\top} \xi _i )} }{ \int_{\mathbb{R} ^p} \big\{ p( \be , \xi | y _{\Kc } ) \prod_{i \in \Lc } e^{\de ( x_{i}^{\top} \be + g_{i}^{\top} \xi _i )} \big\} d( \be , \xi ) } \non 
\end{align}
as $\om \to \infty $. 
In particular, $p( \be , \xi | y) \to p( \be , \xi | y _{\Kc } )$ as $\om \to \infty $ if and only if $\de = 0$. 
\end{thm}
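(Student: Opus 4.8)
The plan is to analyze the posterior density $p(\be,\xi\mid y)$ by factoring out the contributions of the outlying indices $i\in\Lc$ and showing that, after the correct normalization, the $\om\to\infty$ limit is governed solely by the asymptotics of $f_\pi(y;z)/f_\pi(y;0)$ established in Lemma~\ref{lem:pointwise_convergence}. Write
\[
p(\be,\xi\mid y)=\frac{\pi_{\be,\xi}(\be,\xi)\prod_{i\in\Kc}f_\pi\big(y_i; x_i^t\be+g_i^t\xi_i\big)\ \prod_{i\in\Lc}f_\pi\big(y_i(\om); x_i^t\be+g_i^t\xi_i\big)}{\int \pi_{\be,\xi}(\be,\xi)\prod_{i\in\Kc}f_\pi(\cdots)\prod_{i\in\Lc}f_\pi(\cdots)\,d(\be,\xi)} .
\]
Dividing numerator and denominator by $\prod_{i\in\Lc}f_\pi\big(y_i(\om);0\big)$, each outlier factor becomes $f_\pi(y_i(\om);z_i)/f_\pi(y_i(\om);0)$ with $z_i=x_i^t\be+g_i^t\xi_i$; by Lemma~\ref{lem:pointwise_convergence} this converges pointwise to $e^{\de z_i}$ as $\om\to\infty$. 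Hence, at least formally, the numerator converges to $p(\be,\xi\mid y_\Kc)\prod_{i\in\Lc}e^{\de(x_i^t\be+g_i^t\xi_i)}$ (up to the common constant coming from $\prod_{i\in\Kc}f_\pi(y_i;0)$ and the $y_\Kc$-normalizer), and the denominator converges to the integral of this quantity. Since the prior is multivariate normal, the function $\pi_{\be,\xi}(\be,\xi)\prod_{i\in\Kc}f_\pi(y_i;x_i^t\be+g_i^t\xi_i)\prod_{i\in\Lc}e^{\de(x_i^t\be+g_i^t\xi_i)}$ is integrable over $\mathbb{R}^{p+\dim\xi}$: $f_\pi(y_i;z)$ grows at most like $e^{\de z}$ (indeed like a polynomial in $e^z$ of degree $\le y_i$) as $z\to+\infty$ and decays as $z\to-\infty$, so the Gaussian prior dominates; this gives a well-defined limiting density and in particular shows $\de=0$ returns exactly $p(\be,\xi\mid y_\Kc)$.

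The first key step is therefore to make the pointwise convergence argument above rigorous and then upgrade it to convergence of the integrals (numerator and denominator). For this I would produce a $\om$-uniform integrable dominating function for the integrand. The natural bound is to control $f_\pi(y_i(\om);z)/f_\pi(y_i(\om);0)$ uniformly in $\om$ by something like $C\,e^{\de z}(1+\log(1+e^{z}))^{1+b}$ or a similar mild correction, using the tail hypothesis~\eqref{eq:rtail} on $\pi$ together with Lemma~\ref{lem:log} to handle the logarithmic factor when $z>0$, and a crude bound (e.g. $f_\pi(y;z)/f_\pi(y;0)\le 1$ for $z\le 0$, since $f_\pi(y;\cdot)$ is essentially increasing in $z$) when $z\le0$. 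Multiplying by the Gaussian prior $\pi_{\be,\xi}$ and the fixed non-outlier factors $\prod_{i\in\Kc}f_\pi(y_i;x_i^t\be+g_i^t\xi_i)$ yields an integrable envelope independent of $\om$, so dominated convergence applies to both numerator and denominator. I would also need the denominator limit to be strictly positive, which follows from the condition $|\Kc_+|+1\ge|\Lc|+p$ ensuring $p(\be,\xi\mid y_\Kc)$ is a proper posterior with enough ``observations'' relative to $p$, exactly as in Theorem~\ref{thm:robust}; the extra factor $\prod_{i\in\Lc}e^{\de z_i}$ only tilts an already-integrable measure.

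The second key step, for the ``only if'' direction, is to show that when $\de>0$ the tilt $\prod_{i\in\Lc}e^{\de(x_i^t\be+g_i^t\xi_i)}$ genuinely changes the posterior, i.e. the limiting density differs from $p(\be,\xi\mid y_\Kc)$. This is immediate: $e^{\de(x_i^t\be+g_i^t\xi_i)}$ is non-constant in $(\be,\xi)$ whenever $\de>0$ and the design vector $(x_i,g_i)\neq0$, so after renormalization the limit is a nontrivial exponential reweighting of $p(\be,\xi\mid y_\Kc)$ and hence not equal to it; conversely $\de=0$ makes every tilt factor $1$ and the limit is exactly $p(\be,\xi\mid y_\Kc)$. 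I expect the main obstacle to be the construction of the $\om$-uniform dominating function: one must verify that the ratio $f_\pi(y;z)/f_\pi(y;0)$ stays bounded by an $e^{\de z}$-type function uniformly over all large $y$ (not just in the limit), which requires a quantitative version of the Laplace-type asymptotics inside Lemma~\ref{lem:pointwise_convergence} — splitting the defining integral at a threshold $M$, bounding the tail ratio via~\eqref{eq:rtail} and Lemma~\ref{lem:log}, and checking the head contribution is negligible uniformly in $\om$. Everything else is bookkeeping around dominated convergence and the normality of the prior.
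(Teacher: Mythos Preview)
Your overall architecture is exactly the paper's: divide numerator and denominator by $\prod_{i\in\Lc}f_\pi(y_i(\om);0)$, invoke Lemma~\ref{lem:pointwise_convergence} for pointwise convergence of each outlier ratio to $e^{\de z_i}$, and pass the limit inside the denominator integral by dominated convergence. Two points need correction.

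First, your proposed envelope on the negative half-line is wrong. The claim that $f_\pi(y;z)/f_\pi(y;0)\le 1$ for $z\le 0$ because ``$f_\pi(y;\cdot)$ is essentially increasing'' is not justified and in general fails: $z\mapsto f_\pi(y;z)=E_\pi[\mathrm{Po}(y\mid e^z U)]$ is not monotone, since $\la\mapsto\mathrm{Po}(y\mid\la)$ peaks at $\la=y$. What the paper does instead is to sandwich $\pi$ by the explicit comparison density $g(u)=u^{a-1}(1+u)^{-(a+\de)}\{1+\log(1+u)\}^{-(1+b)}$ (using both tail assumptions \eqref{eq:ltail} and \eqref{eq:rtail}), and then bound the density ratio directly:
\[
\frac{\pi(u/e^z)}{\pi(u)\,e^{(1+\de)z}}\ \le\ M^2\Big(\frac{1+u}{e^z+u}\Big)^{a+\de}\Big(\frac{1+\log(1+u)}{1+\log(1+u/e^z)}\Big)^{1+b}\ \le\ M^2\big\{e^{-(a+\de)z}+e^{(1+b)z}\big\},
\]
which after averaging against $\pi(u)u^y e^{-u}$ yields the $y$-uniform envelope $f_\pi(y;z)/f_\pi(y;0)\le M^2\{e^{-az}+e^{(1+b+\de)z}\}$. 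The $e^{-az}$ term is the piece you are missing for $z<0$; it comes from the behavior of $\pi$ at the origin, not the tail, and it blows up as $z\to-\infty$ --- but it is still exponential, so the Gaussian prior dominates it. Your $z>0$ envelope $C\,e^{\de z}(1+\log(1+e^z))^{1+b}$ is in fact sharper than the paper's and can be obtained by the same route via Lemma~\ref{lem:log}; the issue is only on the other side.

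Second, you do not need the design condition $|\Kc_+|+1\ge|\Lc|+p$ here. That hypothesis belongs to Theorem~\ref{thm:robust}, where only $\pi(\xi)$ is assumed normal and $\pi(\be)$ may be improper; it is what guarantees integrability of the non-outlier likelihood against a possibly flat $\be$-prior. In the present theorem the full prior $\pi_{\be,\xi}$ is multivariate normal, so positivity and finiteness of the denominator follow immediately from the crude bound $\prod_{i\in\Kc}f_\pi(y_i;z_i)\le\{\sup_{y,\la}\mathrm{Po}(y\mid\la)\}^{|\Kc|}<\infty$ together with the Gaussian tails. (Relatedly, your parenthetical that $f_\pi(y_i;z)$ is ``a polynomial in $e^z$ of degree $\le y_i$'' is incorrect --- it is uniformly bounded in $z$ for fixed $y_i$ --- but this only strengthens the integrability claim you were after.)
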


Theorem \ref{thm:condition} clearly shows that in order for the posterior $p( \be , \xi | y)$ to be robust against the outliers $y_{\Lc } = y \setminus y_{\Kc }$, it is necessary that 
the error distribution $\pi ( \cdot )$ have an extremely heavy tail with $\de = 0$. 
The RSB mixture distribution has this property, but any polynomial tailed distribution with $\de > 0$, including the SB distribution, fails to have the property. 
The error distribution we propose in the main text satisfies the condition $\de = 0$, and the conclusion of Theorem \ref{thm:condition} holds under our proposed model.

\begin{proof}[Proof of Theorem \ref{thm:condition}]
Since the joint prior distribution of $( \beta , \xi )$ is multivariate normal, without loss of generality, we can assume the model 
\begin{align}
&y_i \sim {\rm{Po}} ( y_i | \exp ( x_{i}^{\top} \be ) u_i ) \text{,} \quad u_i \sim \pi ( u_i ) \text{,} \quad \be \sim \pi _{\be } ( \be ) 
\text{,} \non 
\end{align}
where $\pi _{\be } ( \be )$ is a multivariate normal density. 
Let 
\begin{align}
g(u) &= {u^{a - 1} \over (1 + u)^{a + \de }} {1 \over \{ 1 + \log (1 + u) \} ^{1 + b}} \non 
\end{align}
for $u \in (0, \infty )$. 
Then by assumption there exists $M > 0$ such that 
\begin{align}
(1 / M) g(u) \le \pi (u) \le M g(u) \non 
\end{align}
for all $u \in (0, \infty )$. 
Therefore, for all $u \in (0, \infty )$ and all $z \in \mathbb{R}$, 
\begin{align}
{\pi (u / e^z) \over \pi (u) e^{(1 + \de ) z}} &\le M^2 {g(u / e^z ) \over g(u) e^{(1 + \de ) z}} = M^2 \Big( {1 + u \over e^z + u} \Big) ^{a + \de } \Big\{ {1 + \log (1 + u) \over 1 + \log (1 + u / e^z ) } \Big\} ^{1 + b} \non \\
&\le M^2 ( \max \{ 1, e^{- z} \} )^{a + \de } ( \max \{ 1, e^z \} )^{1 + b} \le M^2 \{ e^{- (a + \de ) z} + e^{(1 + b) z} \} \non 
\end{align}
since $\log (1 + u / \max \{ 1, e^z \} ) \ge (1 / \max \{ 1, e^z \} ) \log (1 + u)$. 
It follows that for all $y \in \mathbb{N} _0$ and all $z \in \mathbb{R}$, 
\begin{align}
{f_{\pi } (y; z) \over f_{\pi } (y; 0)} &= \int_{0}^{\infty } {\pi (u / e^z ) \over \pi (u) e^z} \pi (u) u^y e^{- u} du / \int_{0}^{\infty } \pi (u) u^y e^{- u} du \le M^2 \{ e^{- a z} + e^{(1 + b + \de ) z} \} \text{.} \non 
\end{align}
Also, 
\begin{align}
p( y_{\Kc } | \be ) &= \prod_{i \in \Kc } f_{\pi } ( y_i ; x_{i}^{\top} \be ) \le \Big\{ \sup_{(y, \la ) \in \mathbb{N} _0 \times (0, \infty )} {\rm{Po}} (y | \la ) \Big\} ^{| \Kc |} < \infty \text{.} \non 
\end{align}
Thus, by Lemma \ref{lem:pointwise_convergence} and the dominated convergence theorem, 
\begin{align}
{p( \be | y) \over p( \be | y _{\Kc } )} &= {p( y_{\Kc } ) \over p(y)} \prod_{i \in \Lc } f_{\pi } ( y_i ; x_{i}^{\top} \be ) = \frac{ p( y_{\Kc } ) \prod_{i \in \Lc } \{ f_{\pi } ( y_i ; x_{i}^{\top} \be ) / f_{\pi } ( y_i ; 0) \} }{ \int_{\mathbb{R} ^p} \big[ \pi _{\be } ( \be ) p( y_{\Kc } | \be ) \prod_{i \in \Lc } \{ f_{\pi } ( y_i ; x_{i}^{\top} \be ) / f_{\pi } ( y_i ; 0) \} \big] d\be } \non \\
&\to \frac{ \prod_{i \in \Lc } e^{\de x_{i}^{\top} \be } }{ \int_{\mathbb{R} ^p} \big\{ p( \be | y _{\Kc } ) \prod_{i \in \Lc } e^{\de x_{i}^{\top} \be} \big\} d\be } \non 
\end{align}
as $\om \to \infty $. 
This completes the proof. 
\end{proof}

\section{Proof of Theorem~3} 
Here we prove Theorem~3. 

\begin{proof}[Proof of Theorem~3]
We have 
\begin{align}
{p( \be , \xi | y) \over p( \be , \xi | y_{\Kc } )} &= \frac{ \displaystyle \Big[ \int_{\mathbb{R} ^p \times \mathbb{R} ^q} \pi _{\be , \xi } ( \be , \xi ) \Big\{ \prod_{i \in \Kc } f_{\pi } ( y_i ; x_{i}^{\top} \be + g_{i}^{\top} \xi ) \Big\} d( \be , \xi ) \Big] \prod_{i \in L} {f_{\pi } ( y_i ; x_{i}^{\top} \be + g_{i}^{\top} \xi ) \over f_{\pi } ( y_i ; 0)} }{ \displaystyle \int_{\mathbb{R} ^p \times \mathbb{R} ^q} \pi _{\be , \xi } ( \be , \xi ) \Big\{ \prod_{i \in \Kc } f_{\pi } ( y_i ; x_{i}^{\top} \be + g_{i}^{\top} \xi ) \Big\} \Big\{ \prod_{i \in \Lc } {f_{\pi } ( y_i ; x_{i}^{\top} \be + g_{i}^{\top} \xi ) \over f_{\pi } ( y_i ; 0)} \Big\} d( \be , \xi ) } \text{.} \non 
\end{align}
Then by Lemmas \ref{lem:DCT} and \ref{lem:pointwise_convergence} and the dominated convergence theorem, 
\begin{align}
{p( \be , \xi | y) \over p( \be , \xi | y_{\Kc } )} &\to 1 \non 
\end{align}
as $\om \to \infty $. 
This proves Theorem~3. 
\end{proof}

\begin{lem}
\label{lem:DCT} 
Under the assumptions of Theorem~3, there exists an integrable function
$\overline{h} ( \be, \xi )$ of $( \be, \xi )$ which does not depend on $\om $ such that 
\begin{align}
\pi _{\be , \xi } ( \be , \xi ) \Big\{ \prod_{i \in \Kc } f_{\pi } ( y_i ; x_{i}^{\top} \be + g_{i}^{\top} \xi ) \Big\} \prod_{i \in \Lc } {f_{\pi } ( y_i ; x_{i}^{\top} \be + g_{i}^{\top} \xi ) \over f_{\pi } ( y_i ; 0)} &\le \overline{h} ( \be, \xi ) \text{.} \non 
\end{align}
\end{lem}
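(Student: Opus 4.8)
\textbf{Proof proposal for Lemma~\ref{lem:DCT}.}
The plan is to construct $\overline{h}(\be,\xi)$ by bounding each of the three factors separately and then arguing the product is integrable under the Gaussian prior $\pi(\xi)$ and the assumed count condition $|\Kc_+|+1\ge|\Lc|+p$. First I would handle the non-outlier factor: since ${\rm Po}(y\mid\la)\le 1$ for all $y\in\mathbb{N}_0$ and $\la>0$, each $f_\pi(y_i;z_i)=\int_0^\infty\pi(u){\rm Po}(y_i\mid e^{z_i}u)\,du\le 1$, so $\prod_{i\in\Kc}f_\pi(y_i;x_i^t\be+g_i^t\xi)\le 1$. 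Next, for the outlier factors I would reuse the estimate from the proof of Theorem~\ref{thm:condition}: by the two-sided bound $\pi(u)\asymp g(u)$ and Lemma~\ref{lem:log}, one gets $f_\pi(y;z)/f_\pi(y;0)\le M^2\{e^{-az}+e^{(1+b+\de)z}\}$ for all $y$ and $z$, with $\de=0$ in the RSB case. Thus each outlier factor is bounded by $M^2(e^{-a(x_i^t\be+g_i^t\xi)}+e^{(1+b)(x_i^t\be+g_i^t\xi)})$, a sum of exponentials in $(\be,\xi)$ that does \emph{not} depend on $\om$.

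The key obstacle is that this crude bound on the outlier factors grows exponentially in $(\be,\xi)$ and there are $|\Lc|$ of them, so multiplying by $\pi_{\be,\xi}(\be,\xi)$ alone need not give something integrable in $\be$ once we have thrown away the non-outlier likelihoods entirely (the prior on $\be$ is flat or only partially informative). This is exactly where the dimension hypothesis $|\Kc_+|+1\ge|\Lc|+p$ enters. Rather than bounding $\prod_{i\in\Kc}f_\pi(y_i;\cdot)$ by $1$, I would retain the $|\Kc_+|$ factors with $y_i\ge1$: for such $i$, $f_\pi(y_i;z)$ decays super-exponentially as $z\to+\infty$ (since ${\rm Po}(y_i\mid e^z u)$ is eventually decreasing in $z$ and $\pi$ has a light enough left behavior, $f_\pi(y_i;z)\lesssim e^{-z(y_i - (1+\de))}$ or faster up to the polynomial-times-log tail correction). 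Expanding the product of the $M^2(e^{-a\ell_i}+e^{(1+b)\ell_i})$ over $i\in\Lc$ (writing $\ell_i=x_i^t\be+g_i^t\xi$) into $2^{|\Lc|}$ monomial terms, each term is an exponential $\exp(\sum_{i\in\Lc}c_i\ell_i)$ for constants $c_i\in\{-a,1+b\}$; against this I would allocate, for each such term, $|\Lc|+p-1$ of the $|\Kc_+|$ retained factors (together with the Gaussian prior $\pi(\xi)$ in the $\xi$-directions and the partially informative $\pi(\be)$), choosing the allocation so that the combined exponent is a negative-definite (or coercive) quadratic-plus-linear form in $(\be,\xi)$ — here the counting inequality guarantees enough ``damping'' factors are available.

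The remaining steps are then: (i) verify that the chosen combined bound is in fact integrable over $\mathbb{R}^p\times\mathbb{R}^q$ — this reduces to checking coercivity of a finite sum of (a Gaussian exponent from $\pi(\xi)$, linear exponents from the outlier bounds, and the super-exponential decay contributions from the retained $\Kc_+$ likelihoods), which is where the hypothesis that $\pi(\xi)$ is normal is used to get a genuine Gaussian tail in the $\xi$-block; (ii) take $\overline{h}(\be,\xi)$ to be $\pi_{\be,\xi}(\be,\xi)$ times the finite sum (over the $2^{|\Lc|}$ monomials) of these integrable bounds, noting none depends on $\om$; and (iii) confirm the pointwise inequality in the statement holds for every $\om$, which is immediate from the factor-by-factor bounds. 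I expect the main difficulty to be bookkeeping in step (i)/(ii): making the allocation of the $|\Kc_+|$ ``good'' factors explicit enough that coercivity is transparent, and checking that the geometry of the design — i.e. that the relevant $x_i,g_i$ span the needed directions — is subsumed by the dimension count $|\Kc_+|+1\ge|\Lc|+p$; if the design is degenerate one may need a mild genericity remark, but under the normal prior on $\xi$ the $\xi$-directions are automatically controlled, so only the $\be$-block needs the counting argument.
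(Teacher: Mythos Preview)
There is a genuine gap. The exponential bound $f_\pi(y;z)/f_\pi(y;0)\le M^2\{e^{-az}+e^{(1+b)z}\}$ is correct, but it is precisely the one used in the proof of Theorem~\ref{thm:condition}, where the prior on \emph{all} of $(\be,\xi)$ is Gaussian and therefore absorbs any exponential growth. In Theorem~\ref{thm:robust} only $\pi(\xi)$ is assumed normal; $\pi(\be)$ is merely proper, so exponential growth in the $\be$-directions cannot be handled by the prior and must be damped by the retained $\Kc_+$ likelihoods. Those, however, are much weaker than you assert: for $y_i\ge1$ one has $f_\pi(y_i;z)\sim C_{y_i}e^{-az}$ as $z\to+\infty$ (rate $a\in(0,1)$, independent of $y_i$; this comes from $\pi(u)\sim u^{a-1}$ near the origin after the change of variable $v=e^zu$, not from the Poisson factor), and only $f_\pi(y_i;z)\sim C_{y_i}|z|^{-(1+b)}$ as $z\to-\infty$ (from the log-regularly varying right tail of $\pi$). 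Against your outlier term $e^{(1+b)z}$ you would therefore need on the order of $(1+b)/a>1$ aligned $\Kc_+$ factors per outlier, and against the term $e^{-az}$ (which blows up as $z\to-\infty$) only polynomial damping is available, so no finite number of $\Kc_+$ factors suffices. The count $|\Kc_+|+1\ge|\Lc|+p$ cannot rescue this.

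The paper's key additional idea is to sharpen the outlier-ratio bound to a \emph{polynomial} one: using $\de=0$ together with Lemma~\ref{lem:log}, it shows $f_\pi(y;z)/f_\pi(y;0)\le C(1+|z|)^{1+b}$ uniformly in $y\ge1$. After the splitting $(1+|x_i^t\be+g_i^t\xi|)\le(1+|x_i^t\be|)(1+|g_i^t\xi|)$, and after proving the matching upper bound $f_\pi(y_i;x_i^t\be+g_i^t\xi)\le C\,e^{c|g_i^t\xi|}(1+|x_i^t\be|)^{-(1+b)}$ for $i\in\Kc_+$, the $\be$-block becomes a ratio of polynomials: $|\Lc|$ growth factors $(1+|x_i^t\be|)^{1+b}$ against $|\Kc_+|$ decay factors $(1+|x_i^t\be|)^{-(1+b)}$. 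A geometric lemma of the form $\prod_{i\in\Kc_+}(1+|x_i^t\be|)^{-1}\le(1+\al|\be|)^{-(|\Kc_+|-p+1)}$ for large $|\be|$ (Lemma~S4 of \cite{hamura2020log}) then turns the count condition into boundedness in $\be$, and integrability follows from $\pi(\be)$ being proper. All exponential terms land in the $\xi$-block and are absorbed by the Gaussian $\pi(\xi)$.
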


\begin{proof}
Let 
\begin{align}
g(u) &= {u^{a - 1} \over (1 + u)^a} {1 \over \{ 1 + \log (1 + u) \} ^{1 + b}} \non 
\end{align}
for $u \in (0, \infty )$. 
Then by assumption there exists $M_1 > 0$ such that 
\begin{align}
(1 / M_1 ) g(u) \le \pi (u) \le M_1 g(u) \non 
\end{align}
for all $u \in (0, \infty )$. 

Fix $z \in \mathbb{R}$. 
Then for all $y \in \mathbb{N}$, 
\begin{align}
{f_{\pi } (y; z) \over f_{\pi } (y; 0)} &= {1 \over e^z} \frac{ \int_{0}^{\infty } \pi (u / e^z ) u^y e^{- u} du }{ \int_{0}^{\infty } \pi (u) u^y e^{- u} du } \le {{M_1}^2 \over e^z} \frac{ \int_{0}^{\infty } g(u / e^z ) u^y e^{- u} du }{ \int_{0}^{\infty } g(u) u^y e^{- u} du } \text{.} \non 
\end{align}
Let $M_2 > \sup_{u \in (0, \infty )} \{ (1 + u)^a / (1 + u^a ) \} $. 
Then 
\begin{align}
\int_{0}^{\infty } {g(u / e^z ) \over e^z} u^y e^{- u} du &= \int_{0}^{\infty } {g(u / e^z ) \over g(u) e^z} g(u) u^y e^{- u} du \non \\
&= \int_{0}^{\infty } \Big( {1 + u \over e^z + u} \Big) ^a \Big\{ {1 + \log (1 + u) \over 1 + \log (1 + u / e^z ) } \Big\} ^{1 + b} g(u) u^y e^{- u} du \non \\
&\le M_2 \int_{0}^{\infty } {1 + u^a \over u^a} \Big\{ {1 + \log (1 + u) \over 1 + \log (1 + u / e^z ) } \Big\} ^{1 + b} g(u) u^y e^{- u} du \non \\
&\le M_2 \{ 1 + \log (1 + e^z ) \} ^{1 + b} \int_{0}^{\infty } g(u) ( u^{y - a} + u^y ) e^{- u} du \non 
\end{align}
for all $y \in \mathbb{N}$, where the second inequality follows from Lemma \ref{lem:log}. 
Since $g(u) u^{1 - a} e^{- u}$ is an integrable function of $u \in (0, \infty )$, 
\begin{align}
0 &\le \frac{ \int_{0}^{\infty } g(u) u^{y - a} e^{- u} du }{ \int_{1}^{\infty } g(u) u^{y - a} e^{- u} du } - 1 \le \frac{ \int_{0}^{1} g(u) u^{y - a} e^{- u} du }{ \int_{2}^{\infty } g(u) u^{y - a} e^{- u} du } \le {1 \over 2^{y - 1}} \frac{ \int_{0}^{1} g(u) u^{1 - a} e^{- u} du }{ \int_{2}^{\infty } g(u) u^{1 - a} e^{- u} du } \to 0 \non 
\end{align}
as $y \to \infty $. 
Therefore, there exists $M_3 > 0$ such that 
\begin{align}
\frac{ \int_{0}^{\infty } g(u) u^{y - a} e^{- u} du }{ \int_{1}^{\infty } g(u) u^{y - a} e^{- u} du } &\le M_3 \non 
\end{align}
for all $y \in \mathbb{N}$. 
Thus, 
\begin{align}
{f_{\pi } (y; z) \over f_{\pi } (y; 0)} &\le {M_1}^2 M_2 \{ 1 + \log (1 + e^z ) \} ^{1 + b} \Big\{ \frac{ \int_{0}^{\infty } g(u) u^{y - a} e^{- u} du }{ \int_{0}^{\infty } g(u) u^y e^{- u} du } + 1 \Big\} \non \\
&\le {M_1}^2 M_2 \{ 1 + \log (1 + e^{|z|} ) \} ^{1 + b} \Big\{ \frac{ \int_{1}^{\infty } g(u) u^{y - a} e^{- u} du }{ \int_{1}^{\infty } g(u) u^y e^{- u} du } \frac{ \int_{0}^{\infty } g(u) u^{y - a} e^{- u} du }{ \int_{1}^{\infty } g(u) u^{y - a} e^{- u} du } + 1 \Big\} \non \\
&\le {M_1}^2 M_2 \{ 1 + \log (1 + e^{|z|} ) \} ^{1 + b} ( M_3 + 1) \non \\
&\le {M_1}^2 M_2 ( M_3 + 1) {M_4}^{1 + b} (1 + |z|)^{1 + b} \non 
\end{align}
for all $y \in \mathbb{N}$ for some $M_4 > \sup_{r \in [0, \infty )} \{ 1 + \log (1 + e^r ) \} / (1 + r)$. 
It follows that 
\begin{align}
{f_{\pi } ( y_i ; x_{i}^{\top} \be + g_{i}^{\top} \xi ) \over f_{\pi } ( y_i ; 0)} &\le {M_1}^2 M_2 ( M_3 + 1) {M_4}^{1 + b} (1 + | x_{i}^{\top} \be + g_{i}^{\top} \xi |)^{1 + b}  \non \\
&\le {M_1}^2 M_2 ( M_3 + 1) {M_4}^{1 + b} (1 + | x_{i}^{\top} \be |)^{1 + b} (1 + | g_{i}^{\top} \xi |)^{1 + b} \label{lDCTp1} 
\end{align}
for all $i \in \Lc $ (for sufficiently large $\om $). 

Fix $i \in \Kc _{+}$. 
Then 
\begin{align}
f_{\pi } ( y_i ; x_{i}^{\top} \be + g_{i}^{\top} \xi ) &= \int_{0}^{\infty } {\pi (u / e^{x_{i}^{\top} \be + g_{i}^{\top} \xi } ) \over e^{x_{i}^{\top} \be + g_{i}^{\top} \xi }} {u^{y_i} \over y_i !} e^{- u} du \le M_1 \int_{0}^{\infty } {g(u / e^{x_{i}^{\top} \be + g_{i}^{\top} \xi } ) \over e^{x_{i}^{\top} \be + g_{i}^{\top} \xi }} {u^{y_i} \over y_i !} e^{- u} du \non \\
&= M_1 \int_{0}^{\infty } {u^{y_i + a - 1} e^{- u} / ( y_i !) \over ( e^{x_{i}^{\top} \be + g_{i}^{\top} \xi } + u)^a} {1 \over \{ 1 + \log (1 + u / e^{x_{i}^{\top} \be + g_{i}^{\top} \xi } ) \} ^{1 + b}} du \non \\
&\le M_1 \int_{0}^{\infty } {u^{y_i + a - 1} e^{- u} / ( y_i !) \over ( e^{x_{i}^{\top} \be + g_{i}^{\top} \xi } + u)^a} {\{ 1 + \log (1 + 1 / u) \} ^{1 + b} \over \{ 1 + \log (1 + 1 / e^{x_{i}^{\top} \be + g_{i}^{\top} \xi } ) \} ^{1 + b}} du \text{,} \non 
\end{align}
where the second inequality follows from Lemma \ref{lem:log}. 
Let $0 < \ep < 1$ and let $M_5 > \sup_{u \in (0, \infty )} \{ u^{\ep } / (1 + u^{\ep } ) \} \{ 1 + \log (1 + 1 / u) \} ^{1 + b}$. 
Then 
\begin{align}
f_{\pi } ( y_i ; x_{i}^{\top} \be + g_{i}^{\top} \xi ) &\le M_1 M_5 \int_{0}^{\infty } {u^{y_i + a - 1} e^{- u} / ( y_i !) \over ( e^{x_{i}^{\top} \be + g_{i}^{\top} \xi } + u)^a} {(1 + u^{\ep } ) / u^{\ep } \over \{ 1 + \log (1 + 1 / e^{x_{i}^{\top} \be + g_{i}^{\top} \xi } ) \} ^{1 + b}} du \non \\
&\le M_1 M_5 e^{a | g_{i}^{\top} \xi |} \int_{0}^{\infty } {u^{y_i + a - 1} e^{- u} / ( y_i !) \over ( e^{x_{i}^{\top} \be } + u)^a} {(1 + u^{\ep } ) / u^{\ep } \over \{ 1 + \log (1 + e^{- | g_{i}^{\top} \xi |} / e^{x_{i}^{\top} \be } ) \} ^{1 + b}} du \non \\
&\le {M_1 M_5 e^{a | g_{i}^{\top} \xi |} (1 + e^{| g_{i}^{\top} \xi |} )^{1 + b} \over \{ 1 + \log (1 + 1 / e^{x_{i}^{\top} \be } ) \} ^{1 + b}} \int_{0}^{\infty } {u^{y_i + a - 1} e^{- u} / ( y_i !) \over ( e^{x_{i}^{\top} \be } + u)^a} \Big( {1 \over u^{\ep }} + 1 \Big) du \text{,} \non 
\end{align}
where the third inequality follows from Lemma \ref{lem:log}. 
Now, if $e^{x_{i}^{\top} \be } \ge 1$, then 
\begin{align}
\int_{0}^{\infty } {u^{y_i + a - 1} e^{- u} / ( y_i !) \over ( e^{x_{i}^{\top} \be } + u)^a} \Big( {1 \over u^{\ep }} + 1 \Big) du &\le e^{- a x_{i}^{\top} \be } \int_{0}^{\infty } {u^{y_i + a - 1} e^{- u} \over y_i !} \Big( {1 \over u^{\ep }} + 1 \Big) du \le M_6 e^{- a x_{i}^{\top} \be } \non 
\end{align}
for some $M_6 > \max_{i' \in \Kc _{+}} \{ \Ga ( y_{i'} + a - \ep ) + \Ga ( y_{i'} + a) \} / ( y_{i'} !)$. 
On the other hand, if $e^{x_{i}^{\top} \be } \le 1$, then 
\begin{align}
\int_{0}^{\infty } {u^{y_i + a - 1} e^{- u} / ( y_i !) \over ( e^{x_{i}^{\top} \be } + u)^a} \Big( {1 \over u^{\ep }} + 1 \Big) du &\le \int_{0}^{\infty } {u^{y_i - 1} e^{- u} \over y_i !} \Big( {1 \over u^{\ep }} + 1 \Big) du \le M_7 \non 
\end{align}
for some $M_7 > \max_{i' \in \Kc _{+}} \{ \Ga ( y_{i'} - \ep ) + \Ga ( y_{i'} ) \} / ( y_{i'} !)$. 
Therefore, letting $M_8 > \sup_{r \ge 0} (1 + r)^{1 + b} e^{- a r}$ and $M_9 > \sup_{r \ge 0} (1 + r)^{1 + b} / \{ 1 + \log (1 + e^r ) \} ^{1 + b}$, we have 
\begin{align}
{f_{\pi } ( y_i ; x_{i}^{\top} \be + g_{i}^{\top} \xi ) \over M_1 M_5 e^{a | g_{i}^{\top} \xi |} (1 + e^{| g_{i}^{\top} \xi |} )^{1 + b}} &\le \begin{cases} \displaystyle {M_6 e^{- a x_{i}^{\top} \be } \over \{ 1 + \log (1 + 1 / e^{x_{i}^{\top} \be } ) \} ^{1 + b}} \le {M_6 M_8 \over (1 + | x_{i}^{\top} \be |)^{1 + b}} \text{,} & \text{if $e^{x_{i}^{\top} \be } \ge 1$} \text{,} \\ \displaystyle {M_7 \over \{ 1 + \log (1 + 1 / e^{x_{i}^{\top} \be } ) \} ^{1 + b}} \le {M_7 M_9 \over (1 + | x_{i}^{\top} \be |)^{1 + b}} \text{,} & \text{if $e^{x_{i}^{\top} \be } \le 1$} \text{.} \end{cases} \non 
\end{align}
Thus, 
\begin{align}
f_{\pi } ( y_i ; x_{i}^{\top} \be + g_{i}^{\top} \xi ) &\le M_{10} e^{a | g_{i}^{\top} \xi |} (1 + e^{| g_{i}^{\top} \xi |} )^{1 + b} {1 \over (1 + | x_{i}^{\top} \be |)^{1 + b}} \label{lDCTp2} 
\end{align}
for some $M_{10} > 0$. 

Finally, note that 
\begin{align}
f_{\pi } ( y_i ; x_{i}^{\top} \be + g_{i}^{\top} \xi ) &= \int_{0}^{\infty } \pi (u) \exp (- e^{x_{i}^{\top} \be + g_{i}^{\top} \xi } u) du \le \int_{0}^{\infty } \pi (u) du = 1 \label{lDCTp3} 
\end{align}
if $i \in \Kc _0 = \{ i \in \Kc | y_i = 0 \} $. 
Then, combining (\ref{lDCTp1}), (\ref{lDCTp2}), and (\ref{lDCTp3}), we obtain 
\begin{align}
&\pi _{\be , \xi } ( \be , \xi ) \Big\{ \prod_{i \in \Kc } f_{\pi } ( y_i ; x_{i}^{\top} \be + g_{i}^{\top} \xi ) \Big\} \prod_{i \in \Lc } {f_{\pi } ( y_i ; x_{i}^{\top} \be + g_{i}^{\top} \xi ) \over f_{\pi } ( y_i ; 0)} \non \\
&\le \pi _{\be , \xi } ( \be , \xi ) \Big\{ \prod_{i \in \Kc _{+}} f_{\pi } ( y_i ; x_{i}^{\top} \be + g_{i}^{\top} \xi ) \Big\} \prod_{i \in \Lc } {f_{\pi } ( y_i ; x_{i}^{\top} \be + g_{i}^{\top} \xi ) \over f_{\pi } ( y_i ; 0)} \non \\
&\le \pi _{\be , \xi } ( \be , \xi ) \Big[ \prod_{i \in \Kc _{+}} \Big\{ M_{10} e^{a | g_{i}^{\top} \xi |} (1 + e^{| g_{i}^{\top} \xi |} )^{1 + b} {1 \over (1 + | x_{i}^{\top} \be |)^{1 + b}} \Big\} \Big] \non \\
&\quad \times \prod_{i \in \Lc } \{ {M_1}^2 M_2 ( M_3 + 1) {M_4}^{1 + b} (1 + | x_{i}^{\top} \be |)^{1 + b} (1 + | g_{i}^{\top} \xi |)^{1 + b} \} \text{.} \non 
\end{align}
Since $| \Kc _{+} | \ge p$ by assumption, by Lemma \ref{lem:linear_0w} there exist $R > 0$ and $\al > 0$ such that 
\begin{align}
\prod_{i \in \Kc _{+}} {1 \over 1 + | x_{i}^{\top} \tilde{\be } |} &\le {1 \over (1 + \al | \tilde{\be } |)^{| \Kc _{+} | - p + 1}} \non 
\end{align}
for all $\tilde{\be } \in \mathbb{R} ^p$ satisfying $| \tilde{\be } | \ge R$. 
Thus, 
\begin{align}
&{\pi _{\be , \xi } ( \be , \xi ) \over {M_{10}}^{| \Kc _{+} |} \{ {M_1}^2 M_2 ( M_3 + 1) {M_4}^{1 + b} \} ^{| \Lc |}} \Big\{ \prod_{i \in \Kc } f_{\pi } ( y_i ; x_{i}^{\top} \be + g_{i}^{\top} \xi ) \Big\} \prod_{i \in \Lc } {f_{\pi } ( y_i ; x_{i}^{\top} \be + g_{i}^{\top} \xi ) \over f_{\pi } ( y_i ; 0)} \non \\
&\le \pi _{\be } ( \be | \xi ) \Big\{ \prod_{i \in \Kc _{+}} {1 \over (1 + | x_{i}^{\top} \be |)^{1 + b}} \Big\} \Big\{ \prod_{i \in \Lc } (1 + | x_{i}^{\top} \be |)^{1 + b} \Big\} \non \\
&\quad \times \pi _{\xi } ( \xi ) \Big[ \prod_{i \in \Kc _{+}} \{ e^{a | g_{i}^{\top} \xi |} (1 + e^{| g_{i}^{\top} \xi |} )^{1 + b} \} \Big] \prod_{i \in \Lc } (1 + | g_{i}^{\top} \xi |)^{1 + b} \non \\
&\le \pi _{\be } ( \be | \xi ) \Big[ \sup_{| \tilde{\be } | \le R} {\prod_{i \in \Lc } (1 + | x_{i}^{\top} \tilde{\be } |)^{1 + b} \over \prod_{i \in \Kc _{+}} (1 + | x_{i}^{\top} \tilde{\be } |)^{1 + b}} + \Big\{ {\prod_{i \in \Lc } (1 + | x_{i} | |\be |) \over (1 + \al | \be |)^{| \Kc _{+} | - p + 1}} \Big\} ^{1 + b} \Big] \label{lDCTp4} \\
&\quad \times \pi _{\xi } ( \xi ) \prod_{i = 1}^{n} \{ e^{a | g_{i} | | \xi |} (1 + e^{| g_{i} | | \xi |} )^{1 + b} \} \text{,} \non 
\end{align}
where the prior density of $\xi $, $\pi _{\xi } ( \xi )$, is multivariate normal. 
The right-hand side of the above inequality is an integrable function of $( \be , \xi )$ by assumption. 
This completes the proof. 
\end{proof}

\begin{remark}
\label{remark:thm3} 
The proof of the posterior robustness under an improper prior for $\beta$ is identical until (\ref{lDCTp4}). If the assumption $| \mathcal{K} _{+} | - | \mathcal{L} | - p + 1 \ge 0$ is replaced by $| \mathcal{K} _{+} | - | \mathcal{L} | - p + 1 > p / (1 + b)$, then the conclusion of Theorem~3 holds even if we use a bounded improper prior for $\be | \xi $. 
\end{remark}

\section{Proof of Proposition~1}
\label{sec:supp-proof-prop}
Here, we provide a sufficient condition under which posterior moments of $(\beta ,\xi)$ are finite. Proposition~1 is obtained as a corollary of the following statement.

\begin{prp}
\label{prp:supp-posterior_moments} 
Let $\varphi _1 ( \be )$ and $\varphi _2 ( \xi )$ be functions of $\be $ and $\xi $, respectively. 
Let $m = \#\{ i \in \{ 1, \dots , n \} | y_i \ge 1 \} $. 
Assume that $m \ge p$. 
Suppose that the prior distribution of $\xi $ is multivariate normal. 
Suppose that $\varphi _1 ( \be ) / (1 + | \be |)^{(m - p + 1) (1 + b)}$ and $\varphi _2 ( \xi ) / e^{| \xi |}$ are bounded. 
Then $\varphi _1 ( \be ) \varphi _2 ( \xi )$ has finite posterior expectation. 
\end{prp}

\begin{proof}[Proof of Proposition \ref{prp:supp-posterior_moments}.]
Fix $i \in \{ 1, \dots , n \} $. 
We have 
\begin{align}
p( y_i | \be , \xi ) &= (1 - s) {e^{y_i ( x_{i}^{\top} \be + g_{i}^{\top} \xi )} \over y_i !} \exp ( - e^{x_{i}^{\top} \be + g_{i}^{\top} \xi } ) + s f_{\pi } ( y_i ; x_{i}^{\top} \be + g_{i}^{\top} \xi ) \text{.} \non 
\end{align}
If $y_i = 0$, then 
\begin{align}
p( y_i | \be , \xi ) &\le 1 \text{.} \non 
\end{align}
Next, suppose that $y_i \ge 1$. 
Then, by (\ref{lDCTp2}), 
\begin{align}
f_{\pi } ( y_i ; x_{i}^{\top} \be + g_{i}^{\top} \xi ) &\le M_1 e^{(a + 1 + b) | g_{i}^{\top} \xi |} {1 \over (1 + | x_{i}^{\top} \be |)^{1 + b}} \non 
\end{align}
for some $M_1 > 0$. 
On the other hand, 
\begin{align}
&{e^{y_i ( x_{i}^{\top} \be + g_{i}^{\top} \xi )} \over y_i !} \exp ( - e^{x_{i}^{\top} \be + g_{i}^{\top} \xi } ) \non \\
&\le {1 \over y_i !} {e^{y_i ( x_{i}^{\top} \be + g_{i}^{\top} \xi )} \over \{ 1 + e^{x_{i}^{\top} \be + g_{i}^{\top} \xi } / ( y_i + 1) \} ^{y_i + 1}} = {1 \over y_i !} {e^{( y_i + 1) | g_{i}^{\top} \xi |} e^{y_i g_{i}^{\top} \xi } e^{y_i x_{i}^{\top} \be } \over \{ e^{| g_{i}^{\top} \xi |} + e^{| g_{i}^{\top} \xi |} e^{g_{i}^{\top} \xi } e^{x_{i}^{\top} \be } / ( y_i + 1) \} ^{y_i + 1}} \non \\
&\le {1 \over y_i !} {e^{(2 y_i + 1) | g_{i}^{\top} \xi |} e^{y_i x_{i}^{\top} \be } \over \{ 1 + e^{x_{i}^{\top} \be } / ( y_i + 1) \} ^{y_i + 1}} \le {e^{(2 y_i + 1) | g_{i}^{\top} \xi |} \over y_i !} \times \begin{cases} ( y_i + 1)^{y_i + 1} e^{- | x_{i}^{\top} \be |} \text{,} & \text{if $x_{i}^{\top} \be \ge 0$} \text{,} \\ e^{- y_i | x_{i}^{\top} \be |} \text{,} & \text{if $x_{i}^{\top} \be < 0$} \text{.} \end{cases} \non 
\end{align}
Therefore, since $e^{- y_i | x_{i}^{\top} \be |} \le e^{- | x_{i}^{\top} \be |} \le [ \sup_{r \ge 0} \{ (1 + r)^{1 + b} e^{- r} \} ] / (1 + | x_{i}^{\top} \be |)^{1 + b}$, 
\begin{align}
p( y_i | \be , \xi ) &\le M_2 e^{M_3 | g_{i}^{\top} \xi |} {1 \over (1 + | x_{i}^{\top} \be |)^{1 + b}} \non 
\end{align}
for some $M_2 , M_3 > 0$. 
Thus, since $i \in \{ 1, \dots , n \} $ is arbitrary, 
\begin{align}
\pi ( \be , \xi | y ) &\propto \pi ( \be , \xi ) \prod_{i = 1}^{n} p( y_i | \be , \xi ) \non \\
&\le M_4 \pi ( \be , \xi ) e^{M_5 | \xi |} \Big( \prod_{i \in \{ i' \in \{ 1, \dots , n \} | y_i \ge 1 \} } {1 \over 1 + | x_{i}^{\top} \be |} \Big) ^{1 + b} \non 
\end{align}
for some $M_4 , M_5 > 0$, and the result follows from Lemma \ref{lem:linear_0w}. 
\end{proof}

Proposition~1 immediately follows from this result. When $k \in \mathbb{N}$, $\varphi _1 ( \be ) = | \be |^k$ and $\varphi _2 ( \xi ) = 1$, each component of $\be $ has a finite posterior $k$-th moment if $k<(m-p+1)(1+b)$.
This condition is satisfied when $m>p+k-1$, that is, the number of nonzero observations is larger than $p+k-1$.   

\begin{remark}
\label{remark:prop}
The posterior moments under an improper prior for $\beta$ can be evaluated in a similar way. It follows from the above proof that when we use a bounded improper prior for $\be | \xi $, the function $\varphi _1 ( \be ) \varphi _2 ( \xi )$ has finite posterior expectation if the assumption about $\varphi _1 ( \be )$ is replaced by ``$\{ \varphi _1 ( \be ) \} ^j / (1 + | \be |)^{(m - p + 1) (1 + b) - (p + \ep )}$ is bounded for $j = 0, 1$ for some $\ep > 0$.''
\end{remark}

\section{Details of posterior computation algorithms in our illustrations}
\label{sec:supp-pos}

\subsection{Ordinal count regression}
\label{supp:count-reg}

The sampling algorithm for $\eta_i$ is given in Algorithm 1.
We use a multivariate normal prior, $\beta\sim N(A_{\beta}, B_{\beta})$. 
The full conditional distribution of the regression coefficients $\beta$ is proportional to 
$$
\phi(\beta; A_{\beta}, B_{\beta})\exp\left\{\sum_{i=1}^n \left(y_ix_i^{\top}\beta-\eta_i e^{x_i^{\top}\beta}\right)\right\}.
$$
We adopt the independent Metropolis-Hastings (MH) algorithm to generate $\beta$ from its full conditional distribution. 
First, we consider an approximate of the likelihood part.
Let $\beh$ be the maximizer of $\ell(\beta)\equiv \sum_{i=1}^n \left(y_ix_i^{\top}\beta-\eta_i e^{x_i^{\top}\beta}\right)$, and define the Hessian matrix as
$$
\Sih^{-1}\equiv \frac{\partial^2}{\partial\beta\partial\beta^\top}\ell(\beta)|_{\beta=\beh}=\sum _{i=1}^m \eta_i e^{ x_i^\top \beh } x_ix_i^\top. 
$$
The independent MH algorithm generates a proposal from the approximate full conditional distribution obtained by replacing the likelihood part with $N(\beh, \Sih)$.
Let $\beta^{\rm old}$ be the current value and $\beta^{\rm new}$ the proposal generated from the approximated full conditional distribution $N(\At_{\beta}, \Bt_{\beta})$, where
$$
\Bt_{\beta} = ( \Sih^{-1} + B_{\beta}^{-1} )^{-1}, \ \ \ \ \ \ \At_{\beta} = \Bt_{\beta} ( \Sih^{-1}\beh + B_{\beta}^{-1}A_{\beta}).
$$
Then, we accept $\beta^{\rm new}$ with probability 
$$
\min \left\{ \ 1, \ \prod _{i=1}^n \frac{ {\rm Po}(y_i| \eta_i \exp(x_i^{\top}\beta^{\rm new})) \phi(\beh | \beta^{\rm old} , \Sih ) }{{\rm Po}(y_i| \eta_i \exp(x_i^{\top}\beta^{\rm old})) \phi(\beh | \beta^{\rm new} , \Sih )} \ \right\}.
$$

\subsection{Locally adaptive smoothing for count}
\label{supp:smoothing}

In what follows, we use $D$ instead of $D^k$ for notational simplicity. 
The sampling algorithm for $\eta_i$ is given in Algorithm 1.
Using the approximated Poisson likelihood model presented in Section 3.2, the sampling algorithm for $\theta=(\theta_1,\ldots,\theta_n)^\top$ and $\tau^2$ is as follows:
\begin{itemize}
\item[-](Sampling from $\theta$) \ \ 
The full conditional distribution of $\theta$ is $N(\At_{\theta}, \Bt_{\theta})$, where $\Bt_{\theta}=(\Omega+DWD^\top)^{-1}$ and $\At_{\theta}=\Bt_{\theta}\{\kappa-\Omega(\log\eta-1_n\log \delta )\}$ with $\Omega={\rm diag}(\omega_1,\ldots,\omega_n)$, $W={\rm diag}(1/\lambda_1\tau^2,\ldots,1/\lambda_{n-K}\tau^2)$, $\kappa=(\kappa_1,\ldots,\kappa_n)$ and $\eta=(\eta_1,\ldots,\eta_n)$.
Here $\kappa_i=(y_i-\delta)/2$ and $\omega_i$ is the latent variables. 

\item[-](Sampling from $\omega_i$) \ \ 
The full conditional distribution of $\omega_i$ is $PG(y_i+\delta, \theta_i+\log\eta_i-\log\delta)$.

\item[-](Sampling from latent parameters in the horseshoe)\ \ 
The full conditional distributions of $\lambda_j, \psi_j$, $\tau^2$ and $\gamma$ as follows: 
\begin{align*}
&\lambda_j|\cdot \sim {\rm IG}\left(1, \frac1{\psi_j}+\frac{(D\theta_j)^2}{2\tau^2}\right), \ \ \ \ 
\psi_j|\cdot \sim {\rm IG}\left(1, 1+\frac1{\lambda_j}\right), \ \ \ \ j=1,\ldots,n-k,\\
& \tau^2|\cdot \sim {\rm IG}\left(\frac{n-k+1}{2}, \frac1{\gamma}+\frac12\sum_{j=1}^{n-k}\frac{(D\theta_j)^2}{\lambda_j}\right), \ \ \ \ \gamma|\cdot \sim {\rm IG}\left(1, 1+\frac{1}{\tau^2}\right).
\end{align*}
\end{itemize}

\subsection{Negative binomial regression}

Our model and MCMC algorithm can cover the negative binomial regression by setting $\gamma _i \sim \mathrm{Ga}(\nu , 1)$. The shape parameter, $\nu$, is estimated in our numerical examples with prior $\nu \sim \mathrm{Unif}(0,10^5)$ (in Section~4.1) or $\nu \sim \mathrm{Ga}(1,1)$ (in Sections~4.2 and 4.3), using the random-walk Metropolis Hastings step and the augmentation method \citep{zhou2013negative}, respectively.

\subsection{Hierarchical count regression with spatial effects}
\label{supp:count-sp}

Prior distributions of $\beta$, $\tau^2$ and $h$ are as follows: 
$$
\beta\sim N(A_{\beta}, B_{\beta}), \ \ \ \ 
\tau\sim {\rm Ga}(A_{\tau}, B_{\tau}), \ \ \ \
h\sim U(\underline{h}, \overline{h}).
$$
The sampling algorithm for $\eta_i$ is given in Algorithm 1.
Using the approximated Poisson likelihood model given in Section 3.2, the full conditional distributions of $\beta$, $\xi$, $\tau^2$, $h$ and latent variable $\omega_i$ are proportional to 
\begin{align*}
&\pi(\beta)\pi(\tau)\pi(h)|H_h|^{-1/2}\exp\bigg[-\frac12\sum_{i=1}^n\omega_i\big\{x_i^\top\beta+D_h(s_i)^\top\mu_{\xi}+a_i+\log(\eta_i/\delta)\big\}^2 
\\
& \ \ \ \ \ \ 
-\frac12\tau \mu_{\xi}^{\top}H_h^{-1}\mu_{\xi}
+\sum_{i=1}^n \kappa_i\big\{x_i^\top\beta+D_h(s_i)^\top\mu_{\xi}\big\}\bigg]\prod_{i=1}^n p_{PG}(\omega_i;y_i+\delta, 0).
\end{align*}
Then, the sampling algorithms are obtained as follows: 

\begin{itemize}
\item[-](Sampling from $\omega_i$) \ \ 
The full conditional distribution of $\omega_i$ is $PG(y_i+\delta, x_i^\top\beta+a_i+\xi(s_i)+\log\eta_i-\log\delta)$.

 \item[-](Sampling from $\beta$)\ \ 
 The full conditional distribution of $\beta$ is $N(\At_{\beta}, \Bt_{\beta})$, where 
\begin{align*}
\Bt_{\beta}&=\Big(\sum_{i=1}^{n}\omega_ix_ix_i^\top+B_{\beta}^{-1}\Big)^{-1},\
\At_{\beta}&=\Bt_{\beta}\sum_{i=1}^{n}x_{i}\Big\{\kappa_{i}-\omega_{i}(\log\eta_i+\log a_i+\xi_i-\log\delta)\Big\}.
\end{align*}

\item[-]
(Sampling of $\xi$) \ \ 
The full conditional distribution of $\mu_\xi$ is $N(\At_{\xi}, \Bt_{\xi})$, where 
\begin{align*}
\Bt_{\xi}&=\Big\{\sum_{i=1}^n\omega_{i}D_h(s_{i})D_h(s_{i})^\top + \tau H_h^{-1}\Big\}^{-1}, \\  
\At_{\xi}&=\Bt_{\xi}\sum_{i=1}^nD_h(s_i)\Big\{\kappa_{i}-\omega_{i}(\log\eta_i+\log a_i+x_i^\top\beta-\log\delta)\Big\}.
\end{align*} 
Then the posterior sample of $\xi$ is obtained as $\xi=D_h(s_{i})^{\top}\mu_\xi$.

\item[-]
(Sampling of $h$) \ \ 
The full conditional distribution of $h$ is proportional to 
\begin{align*}
&\pi(h)|H_h|^{-1/2}\exp\bigg(-\frac12\tau \mu_{\xi}^{\top}H_h^{-1}\mu_{\xi}
+\sum_{i=1}^n \kappa_iD_h(s_i)^\top\mu_{\xi}\\
& \ \ \ \ 
-\frac12\sum_{i=1}^n\omega_i\big\{x_i^\top\beta+D_h(s_i)^\top\mu_{\xi}+a_i+\log(\eta_i/\delta)\big\}^2\bigg),
\end{align*}
where $h\in (\underline{h}, \overline{h})$. 
We use the random-walk MH algorithm to generate $h$ from the distribution.

\item[-]
(Sampling of $\tau$)  \ \ 
The full conditional distribution of $\tau$ is ${\rm Ga}(A_{\tau}+n/2, B_{\tau}+ \mu_{\xi}^{\top}H_h^{-1}\mu_{\xi}/2)$.

\end{itemize}

\subsection{Posterior analysis with the scaled-beta distribution}
\label{supp:sb}

It is also possible to use the scaled-beta distribution for $\eta_{2i}$ in Equation~(8) (as distribution $H$) instead of the RSB distribution. The SB distribution has the same spike at zero, so the zero-inflation can be dealt with with this choice of error distribution. By contrast, the tail of the SB distribution is lighter than the RSB distribution. Thus, the SB and RSB distributions differ in their robustness to extremely large counts. 

The posterior analysis where $\eta_{2i} \sim \mathrm{SB}(a,b)$ is straightforward, as we see in this section. The density of $\mathrm{SB}(a,b)$ is, as provided in Section~1,
\begin{equation*}
\pi_{\rm SB}(\eta ;a,b) = \frac{1}{B(a,b)} \eta ^{a-1} (1+\eta )^{-(a+b)}, \ \ \ \ \ \eta >0.
\end{equation*}
It is well-known that the SB distribution is the marginal distribution of the following gamma-gamma mixture:
\begin{equation*}
    \eta | u \sim \mathrm{Ga}(a, u), \ \ \ \ \ \ u \sim \mathrm{Ga}(b,1).
\end{equation*}
Thus, augmenting $\eta _{2i}$ with latent variable $u_i$ simplifies the posterior computation. Most of the Gibbs sampler for the RSB model can be used ``as is,'' while we modify the sampling of $\eta_{2i}$ and $u_i$ as follows:
\begin{itemize}
    \item[-] 
(Sampling of $\eta_{2i}$) \ \  The full conditional distribution of $\eta_{2i}$ is ${\rm Ga}(y_i+a, \la_i+u_i)$ if $z_i=1$ and ${\rm Ga}(a, u_i)$ if $z_i=0$.

\item[-]
(Sampling of $u_i$) \ \ The full conditional distribution of $u_i$ is ${\rm Ga}(a+b, \eta_{2i} + 1)$. 
\end{itemize}

In the numerical examples of the main text, we used $a=b=1/2$.

\section{Additional simulation results}
To investigate the sensitivity of the choice of hyperparameters, $(a,b)$, in RSB, we applied RSB with all the combinations of $a,b\in \{1/4, 1/2, 3/4\}$ by using the same data generating process in Section~4.1.
To evaluate the performance, we computed the mean squared errors (MSE) of the point estimates of $\beta$, scaled mean squared errors (SMSE) of the Poisson intensity, and interval scores (IS) of the $95\%$ credible intervals of $\beta$, based on $R=500$ replications of the simulation.
The results are summarized in Figure~\ref{fig:sim-nst-supp}. 
It is shown that the differences caused by the choice of $(a,b)$ are minimal, particularly compared to the amount of improvement of RSB regression over the other methods as shown in Figure~2 in the main text.

\section{Predictive analysis under the RSB mixture model}

Although not discussed in the main text, predictive analysis could often be of the main interest in practice. Similarly to other statistical models used in applied studies, the predictive distribution obtained from the Poisson regression with the RSB mixture error has no closed form. To evaluate the predictive distribution, we suggest the standard, simulation-based procedure in Bayesian analysis, using the generated MCMC samples. 

For simplicity, we assume that the Poisson regression is used ($\gamma _i= 1$) with no random effect ($\xi_i=0$). 
Let $y_{1:n}$ be the observed data, and $y_{\ast}$ be the predicted value with fixed covariate $x_{\ast}$. Then, the predictive distribution of interest is the conditional distribution, 
\begin{equation*}
\begin{split}
    p(y_{\ast}|x_{\ast},y_{1:n}) &= \int \!\!\! \int \mathrm{Poisson}(y_{\ast}| \eta_{\ast} e^{ x_{\ast}^{\top}\beta } ) \ p(\eta _{\ast} , \beta | y_{1:n})  d\eta_{\ast} d\beta \\
    &= \int \!\!\! \int \!\!\! \int \mathrm{Poisson}(y_{\ast}| \eta_{\ast} e^{ x_{\ast}^{\top}\beta } ) \ \{ s \delta _1(\eta_{\ast}) + (1-s) \pi _{\mathrm{RSB}} (\eta _{\ast};a,b) \} \ p( \beta , s | y_{1:n})  d\eta_{\ast} d\beta ds,
\end{split}
\end{equation*}
where $p(\beta ,s |y_{1:n})$ is approximated by the MCMC samples. This integral representation helps constructing the MCMC samples from the predictive distribution of $y_{\ast}$ as follows. At each scan $t\in 1:T$ of the MCMC algorithm, given the generated sample $(\beta^{(t)}, s^{(t)})$, 
\begin{enumerate}
    \item Generate binary latent variable $z_{\ast}^{(t)}$ from $\mathrm{Bernoulli}(s^{(t)})$.

    \item If $z_{\ast}^{(t)} = 1$, then set $\eta _{\ast}^{(t)} = 1$. 

    If $z_{\ast}^{(t)} = 0$, then generate $\eta _{\ast}^{(t)} \sim \pi _{\mathrm{RSB}}(\eta _i;a,b)$. In doing so, utilize the hierarchical expression given in Section~2 and generate $w_{\ast}^{(t)} \sim \mathrm{Be}(a,1-a)$, $v_{\ast}^{(t)}\sim \mathrm{Ga}(a,w_{\ast}^{(t)})$, $u_{\ast}^{(t)}\sim \mathrm{Ga}(v_{\ast}^{(t)},1)$, and $\eta _{\ast}^{(t)}\sim \mathrm{Ex}(u_{\ast}^{(t)})$.

    \item Generate $y_{\ast}^{(t)} \sim \mathrm{Poisson}( \eta_{\ast}^{(t)} e^{ x_{\ast}^{\top}\beta ^{(t)} } )$.
\end{enumerate}
Then, the generated sequence, $y_{\ast}^{(1)},\dots, y_{\ast}^{(T)}$, approximates the target predictive distribution. For example, we can make point/interval predictions by computing the median and upper/lower quantiles of $y_{\ast}^{(1)},\dots, y_{\ast}^{(T)}$. 

External information can be integrated into this predictive analysis via conditioning. For example, plugging the point estimate of $\beta$ in the likelihood can be justified if one believes that the posterior of $\beta$ is sufficiently peaked with a large sample. 
Another example is that, if one believes that the future point, $y_{\ast}$, is not an outlier, then s/he should compute 
\begin{equation*}
    p(y_{\ast}|x_{\ast},y_{1:n},\eta_{\ast} = 1 ) = \int \mathrm{Poisson}(y_{\ast}| e^{ x_{\ast}^{\top}\beta } )  \ p( \beta  | y_{1:n})  d\beta. 
\end{equation*}
The resulting predictive distribution is expected to have lighter tails and finite moments.

\begin{figure}[!htb]
\centering
\includegraphics[width=12.5cm,clip]{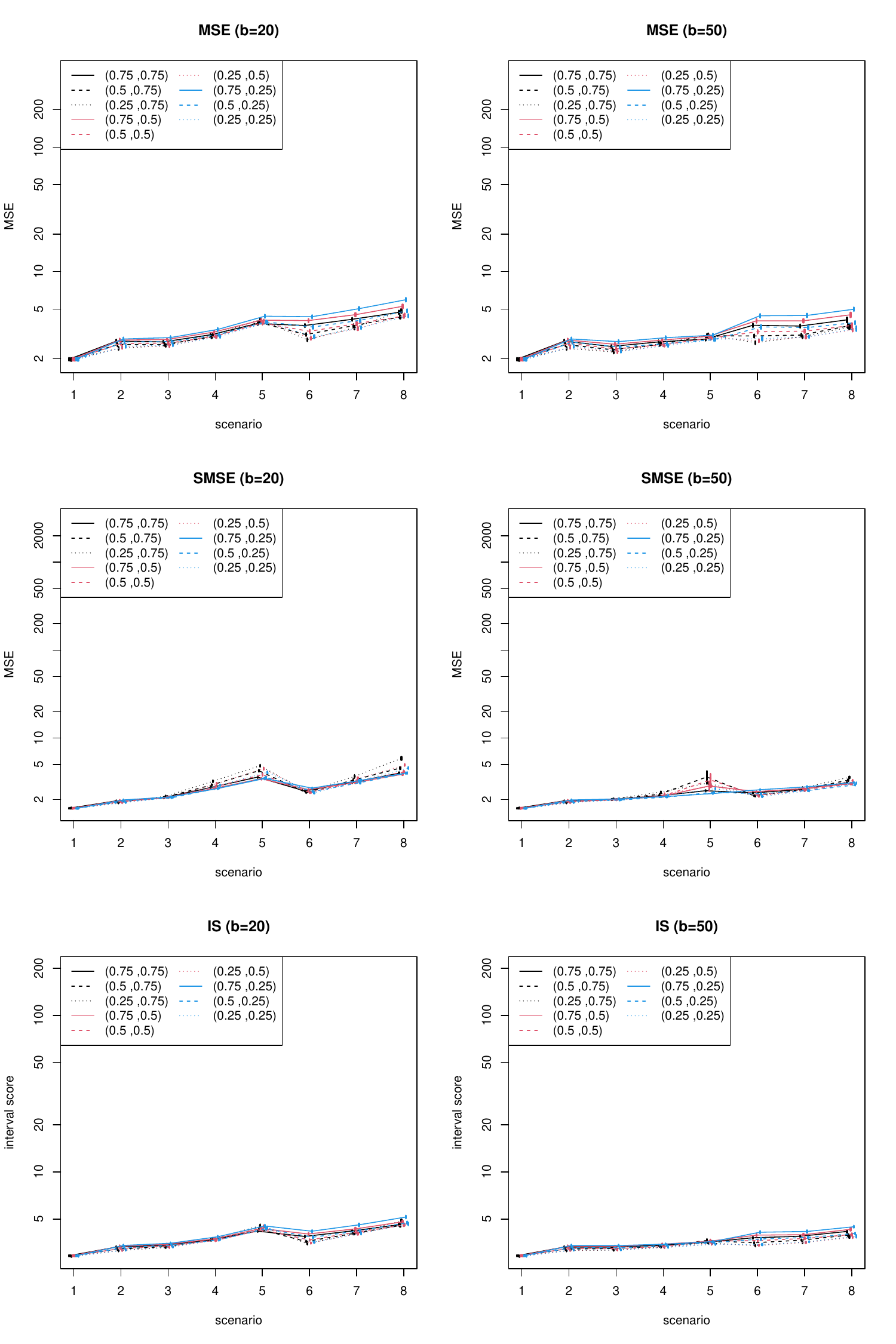}
\caption{MSE, SMSE and IS of RSB regression models with various choices of hyperparameters under 8 scenarios. 
The vertical lines are approximated $99\%$ confidence intervals of the MSE, SMSE and IS based on the estimated Monte Carlo errors.
The scales of the vertical axis are the same as those in Figure~2 in the main text.
\label{fig:sim-nst-supp}
}
\end{figure}

\end{document}